\newif\ifconf\conftrue
\newtheorem{theorem}{Theorem}[section]
\newtheorem{lemma}[theorem]{Lemma}
\newtheorem{corollary}[theorem]{Corollary}
\newtheorem{definition}[theorem]{Definition}
\newtheorem{remark}[theorem]{Remark}
\newtheorem{fact}[theorem]{Fact}
\newcommand{\abs}[1]{\left|#1\right|}
\newcommand{\norm}[2]{\left \lVert#2\right \rVert_{#1}}
\newcommand{\poly}{{\mathrm{poly}}}
 \gdef\xxxmark{%
   \expandafter\ifx\csname @mpargs\endcsname\relax 
     \expandafter\ifx\csname @captype\endcsname\relax 
       \marginpar{xxx}
     \else
       xxx 
     \fi
   \else
     xxx 
   \fi}
 \gdef\xxx{\@ifnextchar[\xxx@lab\xxx@nolab}
 \long\gdef\xxx@lab[#1]#2{{\bf [\xxxmark #2 ---{\sc #1}]}}
 \long\gdef\xxx@nolab#1{{\bf [\xxxmark #1]}}
 \long\gdef\xxx@lab[#1]#2{}\long\gdef\xxx@nolab#1{}%
\DeclareMathOperator*{\argmax}{arg\,max}
\DeclareMathOperator{\E}{\mathbb{E}}
\newcommand\R{\mathbb{R}}
\newcommand\eps{\epsilon}
\DeclareMathOperator{\err}{Err^2}
\begin{document}

\title{On the Power of Adaptivity in Sparse Recovery}

\ifconf
\author{\IEEEauthorblockN{Piotr Indyk}
\IEEEauthorblockA{MIT CSAIL\\
indyk@mit.edu}
\and
\IEEEauthorblockN{Eric Price}
\IEEEauthorblockA{MIT CSAIL\\
ecprice@mit.edu}
\and
\IEEEauthorblockN{David P. Woodruff}
\IEEEauthorblockA{IBM Almaden\\
dpwoodru@us.ibm.com}
}
\else
\author{Piotr Indyk \and Eric Price \and David P. Woodruff}
\fi

\ifconf
\else
\begin{titlepage}
\fi

\maketitle
\begin{abstract}
The goal of (stable) sparse recovery is to recover a $k$-sparse
approximation $x^*$ of a vector $x$ from linear measurements of
$x$. Specifically, the goal is to recover $x^*$ such that
$$\norm{p}{x-x^*} \le C \min_{k\text{-sparse } x'} \norm{q}{x-x'}$$
for some constant $C$ and norm parameters $p$ and $q$. It is known
that, for $p=q=1$ or $p=q=2$, this task can be accomplished using
$m=O(k \log (n/k))$ {\em non-adaptive}
measurements~\cite{CRT06:Stable-Signal} and that this bound is
tight~\cite{DIPW,FPRU,PW11}.

In this paper we show that if one is allowed to perform measurements
that are {\em adaptive} , then the number of measurements can be
considerably reduced. Specifically, for $C=1+\epsilon$ and $p=q=2$ we
show
\begin{itemize}
\item A scheme with $m=O(\frac{1}{\eps}k \log \log (n\eps/k))$
  measurements that uses $O(\log^* k \cdot \log \log (n\eps/k))$
  rounds.  This is a significant improvement over the best
    possible non-adaptive bound.
\item A scheme with $m=O(\frac{1}{\eps}k \log (k/\eps) + k \log
  (n/k))$ measurements that uses {\em two} rounds. This improves over
  the best possible non-adaptive bound.
\end{itemize}

To the best of our knowledge, these are the first results of this type. 

\ifconf
\else
As an independent application, we show how to solve the problem
of finding a duplicate in a data stream of $n$ items drawn
from $\{1, 2, \ldots, n-1\}$ using $O(\log n)$
bits of space and $O(\log \log n)$ passes, improving over the best
possible space complexity achievable using a single pass. 
\fi

\end{abstract}

\ifconf
\else
\thispagestyle{empty}
\end{titlepage}
\fi

\section{Introduction}

In recent years, a new ``linear'' approach for obtaining a succinct
approximate representation of $n$-dimensional vectors (or signals) has
been discovered.  For any signal $x$, the representation is equal to
$Ax$, where $A$ is an $m \times n$ matrix, or possibly a random
variable chosen from some distribution over such matrices.  The vector
$Ax$ is often referred to as the {\em measurement vector} or {\em
  linear sketch} of $x$.  Although $m$ is typically much smaller than
$n$, the sketch $Ax$ often contains plenty of useful information about
the signal $x$.

A particularly useful and well-studied problem is that of {\em stable
  sparse recovery}. We say that a vector $x'$ is $k$-sparse if it has at most $k$ non-zero
coordinates.  The sparse recovery problem is typically defined as follows: for
some norm parameters $p$ and $q$ and an approximation factor $C>0$,
given $Ax$, recover an ``approximation'' vector $x^*$ such that
\begin{equation}
\label{e:lplq}
\norm{p}{x-x^*} \le C \min_{k\text{-sparse } x'}  \norm{q}{x-x'}
\end{equation}
(this inequality is often referred to as {\em $\ell_p/\ell_q$
  guarantee}).  If the matrix $A$ is random, then
Equation~\eqref{e:lplq} should hold for each $x$ with some probability
(say, 2/3).  Sparse recovery has a tremendous number of applications
in areas such as compressive sensing of
signals~\cite{CRT06:Stable-Signal, Don06:Compressed-Sensing}, genetic
data acquisition and analysis~\cite{SAZ,BGSW} and data stream
algorithms\footnote{In streaming applications, a data stream is
  modeled as a sequence of linear operations on an (implicit) vector
  x. Example operations include increments or decrements of $x$'s
  coordinates. Since such operations can be directly performed on the
  linear sketch $Ax$, one can maintain the sketch using only $O(m)$
  words.} \cite{Muthu:survey2, I-SSS}; the latter includes
applications to network monitoring and data analysis. 

It is known~\cite{CRT06:Stable-Signal}
that there exist matrices $A$ and associated recovery algorithms that
produce approximations $x^*$ satisfying Equation~\eqref{e:lplq} with
$p=q=1$, constant approximation factor $C$, and sketch length
\begin{equation}
\label{e:m}
m=O(k \log (n/k))
\end{equation}
A similar bound, albeit using random matrices $A$, was later obtained
for $p=q=2$ \cite{GLPS} (building
on~\cite{CCF,CM03b,CM06:Combinatorial-Algorithms}).  Specifically, for
$C=1+\epsilon$, they provide a distribution over matrices $A$ with
\begin{equation}
\label{e:m2}
m=O(\frac{1}{\eps}k\log(n/k))
\end{equation}
rows, together with an associated recovery algorithm.

It is also known that the bound in~Equation~\eqref{e:m} is
asymptotically optimal for some constant $C$ and $p=q=1$,
see~\cite{DIPW} and~\cite{FPRU} (building on~\cite{GG,G,Kas}).  The
bound of \cite{DIPW} also extends to the randomized case and $p=q=2$.
For $C=1+\epsilon$, 
a lower bound of $m=\Omega(\frac{1}{\eps}k \log(n/k))$ 
was recently shown~\cite{PW11} for the randomized case and $p = q = 2$, 
improving upon the earlier work of \cite{DIPW} and 
showing the dependence on $\epsilon$ is optimal. 
The necessity of the ``extra'' logarithmic factor multiplying $k$ is
quite unfortunate: the sketch length determines the ``compression
rate'', and for large $n$ any logarithmic factor can worsen that rate
tenfold.

In this paper we show that this extra factor can be greatly reduced if
we allow the measurement process to be {\em adaptive}. In the adaptive
case, the measurements are chosen in rounds, and the choice of the
measurements in each round depends on the outcome of the measurements
in the previous rounds. The adaptive measurement model has received a
fair amount of attention in the recent
years~\cite{JXC,CHNR,HCN,HBCN,MSW, AWZ}, see also~\cite{KeCoM}.  In
particular~\cite{HBCN} showed that adaptivity helps reducing the
approximation error in the presence of random noise. However, no
asymptotic improvement to the number of measurements needed for sparse recovery 
(as in Equation~\eqref{e:lplq})
 was previously known.

\paragraph{Results}
In this paper we show that adaptivity can lead to very significant
improvements in the number of measurements over the bounds in
Equations \eqref{e:m} and \eqref{e:m2}. We consider randomized sparse
recovery with $\ell_2/\ell_2$ guarantee, and show two results:
\begin{enumerate}
\item A scheme with $m=O(\frac{1}{\eps}k \log \log (n\eps/k))$
  measurements and an approximation factor $C=1+\eps$.  For low values
  of $k$ this provides an {\em exponential} improvement over the best
  possible non-adaptive bound. The scheme uses $O(\log^* k \cdot
  \log \log (n\eps/k))$ rounds.
\item A scheme with $m=O(\frac{1}{\eps}k \log (k/\eps) + k \log
  (n/k))$ and an approximation factor $C=1+\epsilon$.  For low values
  of $k$ and $\epsilon$ this offers a significant improvement over the
  best possible non-adaptive bound, since the dependence on $n$ and
  $\epsilon$ is ``split'' between two terms. The scheme uses only two
  rounds.
\end{enumerate}

\xxx{Composition?}

\paragraph{Implications}
Our new bounds lead to potentially significant improvements to
efficiency of sparse recovery schemes in a number of application
domains. Naturally, not {\em all} applications support adaptive
measurements. For example, network monitoring requires the
measurements to be performed simultaneously, since we cannot ask the
network to ``re-run'' the packets all over again. However, a
surprising number of applications are capable of supporting
adaptivity. For example:
\begin{itemize}
\item Streaming algorithms for data analysis: since each measurement
  round can be implemented by one pass over the data, adaptive schemes
  simply correspond to multiple-pass streaming algorithms (see ~\cite{McG} 
for some examples of such algorithms).
\item Compressed sensing of signals:  several architectures for
  compressive sensing, e.g., the single-pixel camera
  of~\cite{DDTLTKB}, already perform the measurements in a sequential
  manner. 
   In such cases the measurements can be made adaptive\footnote{We note that, in any realistic sensing system, minimizing the number of measurements is only one of several considerations. Other factors include: minimizing the computation time, minimizing the amount of communication needed to transfer the measurement matrices to the sensor, satisfying constraints on the measurement matrix imposed by the hardware etc. A detailed cost analysis covering all of these factors is architecture-specific, and beyond the scope of this paper. }.
   Other architectures supporting adaptivity are under development~\cite{KeCoM}.
  
\item Genetic data analysis and acqusition: as above.
\end{itemize}

Therefore, it seems likely that  the results in this paper will be applicable in a wide variety of scenarios.

\ifconf
\else
As an example application, we show how to solve the problem
of finding a duplicate in a data stream of $n$ arbitrarily chosen
items from the set $\{1, 2, \ldots, n-1\}$ presented in an arbitrary
order. Our algorithm uses $O(\log n)$
bits of space and $O(\log \log n)$ passes. It is known that
for a single pass, $\Theta(\log^2  n)$ bits of space is necessary
and sufficient \cite{jst11}, and so our algorithm improves upon
the best possible space complexity using a single pass. 
\fi

\paragraph{Techniques}
On a high-level, both of our schemes follow the same two-step process.
First, we reduce the problem of finding the best $k$-sparse
approximation to the problem of finding the best $1$-sparse
approximation (using relatively standard techniques). This is followed
by solving the latter (simpler) problem.

The first scheme starts by ``isolating'' most of of the large
coefficients by randomly sampling $\approx \eps/k$ fraction of the
coordinates; this mostly follows the approach of~\cite{GLPS}
(cf.~\cite{GGIKMS02:Fast-Small-Space}). The crux of the algorithm is
in the identification of the isolated coefficients. Note that in order
to accomplish this using $O(\log \log n)$ measurements (as opposed to
$O(\log n)$ achieved by the ``standard'' binary search algorithm) we
need to ``extract'' significantly more than one bit of information per
measurements.  To achieve this, we proceed as follows.  First, observe
that if the given vector (say, $z$) is {\em exactly} $1$-sparse, then
one can extract the position of the non-zero entry (say $z_j$) from
{\em two} measurements $a(z)=\sum_i z_i$, and $b(z)=\sum_i i z_i$,
since $b(z)/a(z)=j$. A similar algorithm works even if $z$ contains
some ``very small'' non-zero entries: we just round $b(z)/a(z)$ to the
nearest integer. This algorithm is a 
special case of a general algorithm that achieves $O(\log n / \log SNR)$
measurements to identify a single coordinate $x_j$ among $n$ coordinates, where
$SNR = x_j^2/\|x_{[n] \setminus j}\|^2$ (SNR stands for signal-to-noise ratio). 
This is optimal as a function of $n$ and the SNR \cite{DIPW}.

A natural approach would then be to partition $[n]$ into two sets $\{1, \ldots, n/2\}$ and 
$\{n/2+1, \ldots n\}$, find the heavier of the two sets, and recurse. This would 
take $O(\log n)$ rounds. The key observation is that not only do we 
recurse on a smaller-sized set of coordinates, but the SNR has also increased 
since $x_j^2$ has remained the same but the squared norm of the tail has dropped 
by a constant factor. Therefore in the next round we can afford to partition 
our set into more than two sets, since as long as we keep the ratio of 
$\log (\# \textrm{ of sets })$ and $\log SNR$ constant, we only need 
$O(1)$ measurements per round. This ultimately leads to a scheme that finishes 
after $O(\log \log n)$ rounds.


In the second scheme, we start by hashing the coordinates into a
universe of size polynomial in $k$ and $1/\epsilon$, in a way that
approximately preserves the top coefficients without introducing
spurious ones, and in such a way that the mass of the tail of the
vector does not increase significantly by hashing. This idea is
inspired by techniques in the data stream literature for estimating
moments \cite{knpw, TZ}
(cf. \cite{CCF,CM06:Combinatorial-Algorithms,GI}).  Here, though, we
need stronger error bounds. This enables us to identify the positions
of those coefficients (in the hashed space) using only
$O(\frac{1}{\eps}k \log (k/\eps))$ measurements. Once this is done,
for each large coefficient $i$ in the hash space, we identify the
actual large coefficient in the preimage of $i$. This can be achieved
using the number of measurements that does not depend on
$\epsilon$. \xxx{Perhaps even k (log n)/(log k) ?}

\section{Preliminaries}

We start from a few definitions. Let $x$ be an $n$-dimensional vector.

\begin{definition}
  Define
  \[
  H_k(x) = \argmax_{\substack{S \in [n]\\\abs{S} = k}} \norm{2}{x_S}
  \]
  to be the largest $k$ coefficients in $x$.
\end{definition}

\begin{definition}
  For any vector $x$, we define the ``heavy hitters'' to be those
  elements that are both (i) in the top $k$ and (ii) large relative to
  the mass outside the top $k$.  We define
  \[
  H_{k, \eps}(x) = \{j \in H_k(x) \mid x_j^2 \geq \eps \norm{2}{x_{\overline{H_k(x)}}}^2\}
  \]
\end{definition}
\begin{definition}
  Define the error
  \[
  \err(x, k) = \norm{2}{x_{\overline{H_k(x)}}}^2
  \]
\end{definition}

For the sake of clarity, the analysis of the algorithm in
section~\ref{s:tworound} assumes that the entries of $x$ are sorted by
the absolute value (i.e., we have $|x_1| \ge |x_2| \ge \ldots \ge
|x_n|$). In this case, the set $H_k(x)$ is equal to $[k]$; this allows
us to simplify the notation and avoid double subscripts. The
algorithms themselves are invariant under the permutation of the
coordinates of $x$.

\paragraph{Running times of the recovery algorithms} In the non-adaptive model, the running time of the recovery algorithm is well-defined: it is the number of operations performed by a procedure that takes $Ax$ as its input and produces an approximation $x^*$ to $x$. The time needed to generate the measurement vectors $A$, or to encode the vector $x$ using $A$, is not included. 
In the adaptive case, the distinction between the matrix generation, encoding and recovery procedures does not exist, since new measurements are generated based on the values of the prior ones.
Moreover, the running time of the measurement generation procedure
heavily depends on the representation of the matrix. If we suppose
that we may output the matrix in sparse form and receive encodings in
time bounded by the number of non-zero entries in the matrix, our
algorithms run in $n \log^{O(1)} n$ time.

\section{Full adaptivity}
\label{s:full}

This section shows how to perform $k$-sparse recovery with $O(k \log
\log (n/k))$ measurements.  The core of our algorithm is a method for
performing $1$-sparse recovery with $O(\log \log n)$ measurements.  We
then extend this to $k$-sparse recovery via repeated subsampling.

\subsection{$1$-sparse recovery}

This section discusses recovery of $1$-sparse vectors with $O(\log
\log n)$ adaptive measurements.  First, in
Lemma~\ref{lemma:nonadaptiveon} we show that if the heavy hitter $x_j$
is $\Omega(n)$ times larger than the $\ell_2$ error ($x_j$ is
``$\Omega(n)$-heavy''), we can find it with two non-adaptive
measurements.  This corresponds to non-adaptive $1$-sparse recovery with
approximation factor $C = \Theta(n)$; achieving this with $O(1)$
measurements is unsurprising, because the lower bound~\cite{DIPW} is
$\Omega(\log_{1 + C} n)$.

Lemma~\ref{lemma:nonadaptiveon} is not directly very useful, since
$x_j$ is unlikely to be that large.  However, if $x_j$ is $D$ times
larger than everything else, we can partition the coordinates of $x$
into $D$ random blocks of size $N/D$ and perform dimensionality
reduction on each block.  The result will in expectation be a vector
of size $D$ where the block containing $j$ is $D$ times larger than
anything else.  The first lemma applies, so we can recover the block
containing $j$, which has a $1/\sqrt{D}$ fraction of the $\ell_2$
noise.  Lemma~\ref{lemma:shrinknoise} gives this result.

We then have that with two non-adaptive measurements of a $D$-heavy
hitter we can restrict to a subset where it is an $\Omega(D^{3/2})$-heavy
hitter.  Iterating $\log \log n$ times gives the result, as shown in
Lemma~\ref{lemma:onesparse}.

\begin{lemma}\label{lemma:nonadaptiveon}
  Suppose there exists a $j$ with $\abs{x_j} \geq C
  \frac{n}{\sqrt{\delta}} \norm{2}{x_{[n]\setminus\{j\}}}$ for some
  constant $C$.  Then two non-adaptive measurements suffice to recover
  $j$ with probability $1-\delta$.
\end{lemma}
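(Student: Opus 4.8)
The plan is to take the two measurements to be $a=\sum_i \sigma_i x_i$ and $b=\sum_i i\,\sigma_i x_i$, where $\sigma_1,\dots,\sigma_n$ are i.i.d.\ uniform $\pm1$ signs, used identically in both measurements; these are non-adaptive (they do not depend on each other's outcomes), and the recovery rule is to output $\mathrm{round}(b/a)$. Write $E=\norm{2}{x_{[n]\setminus\{j\}}}$, so the hypothesis reads $\abs{x_j}\ge (Cn/\sqrt{\delta})\,E$. Decompose $a=\sigma_j x_j+\nu_a$ and $b=j\sigma_j x_j+\nu_b$ with noise terms $\nu_a=\sum_{i\ne j}\sigma_i x_i$ and $\nu_b=\sum_{i\ne j}i\,\sigma_i x_i$. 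The point of reusing $\sigma_j$ in both measurements is that it cancels in the ratio: $b/a=(jx_j+\sigma_j\nu_b)/(x_j+\sigma_j\nu_a)$, hence $\abs{b/a-j}=\abs{\nu_b-j\nu_a}/\abs{x_j+\sigma_j\nu_a}$.

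The first step is to control the noise. Since the $\sigma_i$ are independent and mean zero, $\E[\nu_a^2]=\sum_{i\ne j}x_i^2=E^2$ and $\E[\nu_b^2]=\sum_{i\ne j}i^2 x_i^2\le n^2E^2$. Applying Markov's inequality to $\nu_a^2$ and to $\nu_b^2$, each with failure probability $\delta/2$, and taking a union bound, we get that with probability at least $1-\delta$ both $\abs{\nu_a}\le\sqrt{2/\delta}\,E$ and $\abs{\nu_b}\le\sqrt{2/\delta}\,nE$ hold. Condition on this event for the rest of the argument.

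The second step is a short estimate showing the ratio lands within $1/2$ of $j$. From the hypothesis, $\abs{\nu_a}\le\sqrt{2/\delta}\,E\le (\sqrt2/(Cn))\abs{x_j}\le\abs{x_j}/2$ once $C\ge 2\sqrt2$, so the denominator satisfies $\abs{x_j+\sigma_j\nu_a}\ge\abs{x_j}/2$ (in particular $a\ne0$, so the recovery rule is well defined); and since $j\le n$, the numerator satisfies $\abs{\nu_b-j\nu_a}\le\abs{\nu_b}+n\abs{\nu_a}\le 2\sqrt{2/\delta}\,nE$. Combining, $\abs{b/a-j}\le 4\sqrt{2/\delta}\,nE/\abs{x_j}\le 4\sqrt{2}/C<1/2$ for $C$ a sufficiently large absolute constant. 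Therefore $\mathrm{round}(b/a)=j$, and the lemma follows.

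The only subtlety worth flagging is why randomness enters at all: with the deterministic choice $\sigma_i\equiv1$, the quantity $\sum_{i\ne j}i\,x_i$ can be as large as $\sqrt{n}\,\norm{1}{x_{[n]\setminus\{j\}}}\approx n^{3/2}E$ rather than $nE$, which would force a stronger hypothesis by a $\sqrt n$ factor. Random signs replace this worst case by the $\sqrt{1/\delta}$ appearing in the statement, which is exactly the role of the failure probability $\delta$. Beyond that observation, the proof is just the bookkeeping with constants and the two-event union bound sketched above, so there is no real obstacle.
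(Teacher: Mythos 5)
Your proof is correct and is essentially the paper's own argument: the same two sign-randomized measurements $a=\sum_i s(i)x_i$ and $b=\sum_i i\,s(i)x_i$ (the paper uses weights $n+i$ and subtracts $n$ before rounding, an immaterial shift), the same Chebyshev/Markov control of the two noise terms with a union bound, and the same estimate showing $b/a$ lands within $1/2$ of $j$. The only cosmetic differences are that the paper draws $s$ from a pairwise-independent family rather than i.i.d.\ signs (pairwise independence suffices for the variance computation, as your argument also only uses) and bounds the denominator as $\abs{\abs{x_j}-\abs{a(z)}}$ rather than cancelling $s(j)$ explicitly.
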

\begin{proof}
  Let $s \colon [n] \to \{\pm 1\}$ be chosen from a $2$-wise
  independent hash family.  Perform the measurements $a(x) = \sum s(i)x_i$
  and $b(x) = \sum (n + i) s(i)x_i$.  For recovery, output the closest
  integer to $b / a - n$.

  Let $z = x_{[n]\setminus \{j\}}$.  Then $\E[a(z)^2] = \norm{2}{z}^2$ and $\E[b(z)^2]
  \leq 4n^2\norm{2}{z}^2$.  Hence with probability at least $1 -
  2\delta$, we have both
  \begin{align*}
    \abs{a(z)} \leq \sqrt{1/\delta}\norm{2}{z}\\
    \abs{b(z)} \leq 2n\sqrt{1/\delta}\norm{2}{z}
  \end{align*}
  Thus
  \begin{align*}
    \frac{b(x)}{a(x)} =& \frac{s(j)(n+j)x_j + b(z)}{s(j)x_j + a(z)}\\
    \abs{\frac{b(x)}{a(x)} - (n + j)} =& \abs{\frac{b(z) - (n+j)a(z)}{s(j)x_j + a(z)}}\\
    \leq& \frac{\abs{b(z)} + (n+j)\abs{a(z)} }{\abs{\abs{x_j} - \abs{a(z)}}}\\
    \leq& \frac{4n\sqrt{1/\delta}\norm{2}{z}}{\abs{\abs{x_j} - \abs{a(z)}}}
  \end{align*}
  Suppose $\abs{x_j} > (8n+1)\sqrt{1/\delta}\norm{2}{z}$.  Then
  \begin{align*}
    \abs{\frac{b(x)}{a(x)} - (n + j)} <& \frac{4n\sqrt{1/\delta}\norm{2}{z}}{8n\sqrt{1/\delta}\norm{2}{z}}\\
    =& 1/2
  \end{align*}
  so $\hat{\imath} = j$.
\end{proof}

\begin{lemma}\label{lemma:shrinknoise}
  Suppose there exists a $j$ with $\abs{x_{j}} \geq
  C\frac{B^2}{\delta^2} \norm{2}{x_{[n]\setminus\{j\}}}$ for some
  constant $C$ and parameters $B$ and $\delta$.  Then with two
  non-adaptive measurements, with probability $1-\delta$ we can find a
  set $S \subset [n]$ such that $j \in S$ and $\norm{2}{x_{S \setminus
      \{j\}}} \leq \norm{2}{x_{[n]\setminus\{j\}}} / B$ and $\abs{S}
  \leq 1 + n/B^2$.
\end{lemma}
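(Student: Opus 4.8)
The plan is to reduce Lemma~\ref{lemma:shrinknoise} to Lemma~\ref{lemma:nonadaptiveon} by a bucketing-plus-sketching trick. First I would partition $[n]$ into $B^2$ buckets using a hash function $h \colon [n] \to [B^2]$ drawn from a pairwise-independent family, so that each bucket has expected size $n/B^2$. For each bucket $b \in [B^2]$ I would take the two measurements prescribed by Lemma~\ref{lemma:nonadaptiveon} restricted to the coordinates hashed to $b$; that is a constant number of measurements total (two, since the hash-bucket structure can be folded into the sign pattern and the index weights, so really we take two linear measurements of the whole vector whose coefficients encode both the bucket identity and the within-bucket index). The idea is that the derived ``bucket vector'' $y \in \R^{B^2}$, with $y_b = \norm{2}{x_{h^{-1}(b)}}$ in spirit (or rather the signed sum on which Lemma~\ref{lemma:nonadaptiveon} operates), has a heavy hitter at $b^* = h(j)$ because $\abs{x_j}$ dominates, and the $\ell_2$ mass of $y$ away from $b^*$ is at most $\norm{2}{x_{[n]\setminus\{j\}}}$.

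Next I would check the two required properties. For the noise-shrinkage property $\norm{2}{x_{S\setminus\{j\}}} \le \norm{2}{x_{[n]\setminus\{j\}}}/B$: conditioned on a fixed bucket, each coordinate $i \ne j$ lands in bucket $b^*$ with probability $1/B^2$, so $\E[\norm{2}{x_{h^{-1}(b^*)\setminus\{j\}}}^2] = \norm{2}{x_{[n]\setminus\{j\}}}^2 / B^2$, and a Markov bound over the $\delta$-budget gives $\norm{2}{x_{S\setminus\{j\}}}^2 \le \norm{2}{x_{[n]\setminus\{j\}}}^2 / B^2 \cdot O(1/\delta)$, which after absorbing constants into the hypothesis yields the bound with a clean factor $1/B$ (one must be a little careful here: to get exactly $1/B$ rather than $1/(B\sqrt{\delta})$ one should hash into roughly $B^2/\delta$ buckets, which is why the hypothesis carries a $B^2/\delta^2$ rather than $B^2$ — the extra $\delta$ factors pay for the Markov step and for boosting Lemma~\ref{lemma:nonadaptiveon}'s success probability). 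For the size bound $\abs{S} \le 1 + n/B^2$: again $\E[\abs{h^{-1}(b^*)\setminus\{j\}}] = (n-1)/B^2$, so by Markov $\abs{S} \le 1 + n/B^2$ holds with the desired probability, possibly after a constant-factor adjustment of the number of buckets.

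Then I would verify that Lemma~\ref{lemma:nonadaptiveon} actually applies to the bucket vector. Its hypothesis needs the heavy coordinate to exceed $\Theta(m/\sqrt{\delta'})$ times the $\ell_2$ noise, where $m = B^2$ (ish) is the new dimension and $\delta'$ the target failure probability. Plugging in $m \approx B^2/\delta$ and the noise bound $\norm{2}{x_{[n]\setminus\{j\}}}$ shows we need $\abs{x_j} \gtrsim (B^2/\delta)/\sqrt{\delta} \cdot \norm{2}{x_{[n]\setminus\{j\}}} = \Theta(B^2/\delta^{3/2})\norm{2}{x_{[n]\setminus\{j\}}}$, which is implied by the stated hypothesis $\abs{x_j} \ge C B^2 \delta^{-2}\norm{2}{x_{[n]\setminus\{j\}}}$ for a suitable constant $C$ (the slack between $\delta^{-3/2}$ and $\delta^{-2}$ again covers the Markov and union-bound losses). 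Finally a union bound over the $O(1)$ bad events — Lemma~\ref{lemma:nonadaptiveon} failing, the noise bucket being too heavy, the bucket being too large — gives total failure probability $\le \delta$, and we output $S = h^{-1}(b^*)$ where $b^*$ is the bucket index returned by the inner recovery.

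The main obstacle I anticipate is bookkeeping the probability and constant budget: we are asking a single application of Lemma~\ref{lemma:nonadaptiveon} to succeed simultaneously with two Markov-type tail bounds, all at confidence $1-\delta$, while keeping the number of buckets at $\Theta(B^2/\poly(\delta))$ so that both the $1/B$ noise reduction and the $1+n/B^2$ size bound come out with the clean constants stated. Getting the exponents of $\delta$ to line up — so that the hypothesis really only needs $B^2/\delta^2$ and not something worse — is the delicate part; the geometric/linear-algebra content (that restricting to a bucket both shrinks the noise and preserves the heavy hitter) is straightforward by linearity of expectation once the hashing is set up.
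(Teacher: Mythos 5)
Your proposal is correct and follows essentially the same route as the paper: hash into $\Theta(B^2/\delta)$ buckets with a pairwise-independent $h$ and signs $s$, fold the bucket structure into two linear measurements of $x$, apply Lemma~\ref{lemma:nonadaptiveon} to the resulting bucket vector to identify $p^*=h(j)$, and use Markov/Chebyshev for the size and noise-shrinkage bounds on $S=h^{-1}(p^*)$. The one place your bookkeeping is slightly optimistic --- treating the bucket vector's off-$p^*$ noise as at most $\norm{2}{x_{[n]\setminus\{j\}}}$ rather than that times $1/\sqrt{\delta}$ from a Markov bound on $\norm{2}{Ay}^2$ --- is exactly what the remaining $\delta^{-1/2}$ of slack in the hypothesis absorbs, as you anticipated.
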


\begin{proof}
  Let $D = B^2/\delta$, and let $h \colon [n] \to [D]$ and $s \colon
  [n] \to \{\pm 1\}$ be chosen from pairwise independent hash
  families.  Then define $S_p = \{i \in [n] \mid h(i) = p\}$.  Define
  the matrix $A\in \R^{D\times n}$ by $A_{h(i),i} = s(i)$ and
  $A_{p,i}=0$ elsewhere.  Then
  \[
  (Az)_p = \sum_{i \in S_p} s(i)z_i.
  \]

  Let $p^* = h(j)$ and $y = x_{[n]\setminus\{j\}}$.  We have that
  \begin{align*}
    \E[\abs{S_{p^*}}] =& 1 + (n-1)/D\\
    \E[ (Ay)_{p^*}^2] = \E[\norm{2}{y_{S_{p^*}}}^2] =& \norm{2}{y}^2 / D\\
    \E[\norm{2}{Ay}^2] =& \norm{2}{y}^2
  \end{align*}
  Hence by Chebyshev's inequality, with probability at least $1 -
  4\delta$ all of the following hold:
  \begin{align}
    \abs{S_{p^*}} \leq& 1 + (n-1) / (D\delta) \leq 1 + n / B^2 \label{eq:Sp*}\\
    \norm{2}{y_{S_{p^*}}} \leq& \norm{2}{y} / \sqrt{D\delta} \label{eq:p*error}\\
    \abs{(Ay)_{p^*}} \leq& \norm{2}{y} / \sqrt{D\delta} \label{eq:Ap*error}\\
    \norm{2}{Ay} \leq& \norm{2}{y} /\sqrt{\delta}.\label{eq:Ayerror}
  \end{align}
  The combination of~\eqref{eq:Ap*error} and~\eqref{eq:Ayerror} imply
  \begin{align*}
    \abs{(Ax)_{p^*}} \geq& \abs{x_{j}} - \abs{(Ay)_{p^*}} \geq
    (CD/\delta - 1/\sqrt{D\delta}) \norm{2}{y}
    \ifconf\\\fi
    \geq
    (CD/\delta - 1/\sqrt{D\delta}) \sqrt{\delta}\norm{2}{Ay} \geq
    \frac{CD}{2\sqrt{\delta}} \norm{2}{Ay}
  \end{align*}
  and hence
  \[
  \abs{(Ax)_{p^*}}  \geq   \frac{CD}{2\sqrt{\delta}} \norm{2}{(Ax)_{[D] \setminus p^*}}.
  \]
  As long as $C/2$ is larger than the constant in
  Lemma~\ref{lemma:nonadaptiveon}, this means two non-adaptive
  measurements suffice to recover $p^*$ with probability $1 - \delta$.
  We then output the set $S_{p^*}$, which by~\eqref{eq:p*error} has
  \begin{align*}
    \norm{2}{x_{S_{p^*} \setminus \{j\}}} =& \norm{2}{y_{S_{p^*}}} \leq
    \norm{2}{y}/\sqrt{D\delta}
    \ifconf\\\fi
    =
    \norm{2}{x_{[n]\setminus\{j\}}} /
    \sqrt{D\delta} = \norm{2}{x_{[n]\setminus\{j\}}} / B
  \end{align*}
  as desired.  The overall failure probability is $1 - 5\delta$;
  rescaling $\delta$ and $C$ gives the result.
\end{proof}

\begin{algorithm}
  \caption{Adaptive $1$-sparse recovery}\label{alg:adaptive1}
  \begin{algorithmic}
    \Procedure{NonAdaptiveShrink}{$x$, $D$}\ \ \ \ \ \ \ \ \ \ \Comment{Find smaller set $S$ containing heavy coordinate $x_j$}
    \State For $i \in [n]$, $s_1(i) \gets \{\pm 1\}, h(i) \gets [D]$
    \State For $i \in [D]$, $s_2(i) \gets \{\pm 1\}$ 
    \State $a \gets \sum s_1(i) s_2(h(i))x_i$\Comment{Observation}
    \State $b \gets \sum s_1(i) s_2(h(i))x_i(D + h(i))$\Comment{Observation}
    \State $p^* \gets \textsc{Round}(b/a - D)$.
    \State \Return $\{j^* \mid h(j^*) = p^*\}$.
    \EndProcedure
  \end{algorithmic}
  \begin{algorithmic}
    \Procedure{AdaptiveOneSparseRec}{$x$}\Comment{Recover heavy coordinate $x_j$}
    \State $S \gets [n]$
    \State $B \gets 2$, $\delta \gets 1/4$
    \While{$\abs{S} > 1$}
    \State $S \gets \textsc{NonAdaptiveShrink}(x_S, 4B^2/\delta)$
    \State $B \gets B^{3/2}$, $\delta \gets \delta / 2$.
    \EndWhile
    \State \Return $S[0]$
    \EndProcedure
  \end{algorithmic}
\end{algorithm}

\begin{lemma}\label{lemma:onesparse}
  Suppose there exists a $j$ with $\abs{x_j} \geq C
  \norm{2}{x_{[n]\setminus\{j\}}}$ for some constant $C$.  Then
  $O(\log \log n)$ adaptive measurements suffice to recover $j$ with
  probability $1/2$.
\end{lemma}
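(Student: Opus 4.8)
The plan is to iterate Lemma~\ref{lemma:shrinknoise} inside a while loop, exactly as in procedure \textsc{AdaptiveOneSparseRec}, and track how the ``heaviness'' parameter, the failure probability, and the noise bound evolve across rounds. I will maintain the invariant that at the start of round $t$ we have restricted attention to a set $S_t \subseteq [n]$ with $j \in S_t$, and that $x_j$ is $B_t$-heavy within $S_t$, meaning $\abs{x_j} \geq C_t \frac{B_t^2}{\delta_t^2}\norm{2}{x_{S_t \setminus \{j\}}}$ for a suitable constant $C_t$, where $B_t$ starts at $2$ and is updated by $B_{t+1} = B_t^{3/2}$, while $\delta_t$ starts at $1/4$ and halves each round. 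Initially the hypothesis $\abs{x_j} \geq C \norm{2}{x_{[n]\setminus\{j\}}}$ gives $1$-heaviness for whatever absolute constant we need, by taking $C$ large enough; this is where the unspecified constant $C$ in the statement gets absorbed.

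First I would verify the inductive step. Applying Lemma~\ref{lemma:shrinknoise} with parameters $B_t, \delta_t$ to the vector $x_{S_t}$: with probability $1 - \delta_t$ we obtain $S_{t+1} \subseteq S_t$ with $j \in S_{t+1}$, with $\norm{2}{x_{S_{t+1}\setminus\{j\}}} \leq \norm{2}{x_{S_t \setminus\{j\}}}/B_t$, and with $\abs{S_{t+1}} \leq 1 + \abs{S_t}/B_t^2$. The key computation is that the heaviness multiplies up: $x_j$ is unchanged, the noise drops by a factor $B_t$, so $x_j$ is now roughly $B_t$ times heavier relative to the noise than the $B_t^2/\delta_t^2$ factor demanded, and one checks this is enough to guarantee $B_{t+1}$-heaviness with the next parameters, i.e.\ that $B_t \cdot \frac{B_t^2}{\delta_t^2} \gtrsim \frac{B_{t+1}^2}{\delta_{t+1}^2} = \frac{B_t^3}{(\delta_t/2)^2}$, which holds with room to spare since $4B_t^3 / \delta_t^2$ versus $B_t^3/\delta_t^2$ — actually here I need to be slightly careful, since $B_t \cdot B_t^2 = B_t^3$ exactly matches $B_{t+1}^2 = B_t^3$, and the $\delta$ factor only costs a constant $4$, so it goes through by charging that constant $4$ into the $C$'s (this is precisely why the code passes $4B^2/\delta$ rather than $B^2/\delta$ to \textsc{NonAdaptiveShrink}).

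Next I would bound the number of rounds. Since $B_{t+1} = B_t^{3/2}$ starting from $B_0 = 2$, we get $\log B_t = (3/2)^t \log 2$, so $B_t^2 = 2^{2(3/2)^t}$ grows doubly exponentially. The size recursion $\abs{S_{t+1}} \leq 1 + \abs{S_t}/B_t^2$ then shows $\abs{S_t}$ drops below $2$ (hence equals $1$, since $j$ is always retained) after $t = O(\log\log n)$ rounds: unrolling gives $\abs{S_t} \lesssim n / \prod_{i<t} B_i^2 + O(1) = n \cdot 2^{-\Omega((3/2)^t)} + O(1)$, which is $O(1)$ once $(3/2)^t \gtrsim \log n$, i.e.\ $t \gtrsim \log\log n$. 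Each round uses exactly two measurements, so the total is $O(\log\log n)$ measurements. Finally, the failure probability is a union bound over rounds: $\sum_t \delta_t \leq \sum_t 2^{-t}/4 \leq 1/2$, so with probability at least $1/2$ every round succeeds and the loop returns the unique element of the final singleton, which is $j$.

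The main obstacle is the bookkeeping in the inductive step: confirming that the heaviness hypothesis of Lemma~\ref{lemma:shrinknoise} is re-established each round with the updated constants, and in particular checking that the doubly-exponentially growing $B_t$ does not outrun the growth of the heaviness ratio. Since $x_j$ stays fixed while the noise shrinks by only $B_t$ per round but the \emph{required} heaviness factor $B_t^2/\delta_t^2$ grows like $B_t^3$, at first glance the demand seems to grow faster than the supply. The resolution — which I would state carefully — is that heaviness is \emph{defined relative to the shrinking noise}, so the effective ratio $\abs{x_j}/\norm{2}{x_{S_t\setminus\{j\}}}$ itself grows multiplicatively, and a short induction shows it stays ahead of $B_t^2/\delta_t^2$ by the same absolute constant throughout; everything else is a routine geometric-series estimate.
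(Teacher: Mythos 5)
Your proposal is correct and follows essentially the same route as the paper's proof: iterate Lemma~\ref{lemma:shrinknoise} with $B_{t+1}=B_t^{3/2}$, $\delta_t=2^{-t}/4$, maintain the heaviness invariant by induction (covering the factor-$4$ per-round deficit from halving $\delta$ by invoking the shrink lemma with an extra constant factor on $B$, exactly as the paper does by using $B=4B_i$), and conclude with the doubly-exponential growth of $B_t$ for the $O(\log\log n)$ round count and the union bound $\sum_t\delta_t\le 1/2$. The only cosmetic difference is the final size bound: rather than unrolling to $n/\prod_i B_i^2 + O(1)$ (which needs one more line to get strictly below $2$), the paper simply applies the one-step bound $\abs{S_r}\le 1+n/B_r^2<2$ at the last round.
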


\begin{proof}
  Let $C'$ be the constant from Lemma~\ref{lemma:shrinknoise}.  Define
  $B_0 = 2$ and $B_{i} = B_{i-1}^{3/2}$ for $i \geq 1$.  Define
  $\delta_i = 2^{-i}/4$ for $i \geq 0$.  Suppose $C \geq 16C'
  B_0^2/\delta_0^2$.

  Define $r = O(\log \log n)$ so $B_r \geq n$.  Starting with $S_0 =
  [n]$, our algorithm iteratively applies
  Lemma~\ref{lemma:shrinknoise} with parameters $B = 4B_i$ and $\delta
  = \delta_i$ to $x_{S_i}$ to identify a set $S_{i+1} \subset S_i$
  with $j \in S_{i+1}$, ending when $i = r$.

  We prove by induction that Lemma~\ref{lemma:shrinknoise} applies at
  the $i$th iteration.  We chose $C$ to match the base case.  For the
  inductive step, suppose $\norm{2}{x_{S_i \setminus \{j\}}}\leq
  \abs{x_j} / (C' 16\frac{B_i^2}{\delta_i^2})$.  Then by
  Lemma~\ref{lemma:shrinknoise},
  \[
  \norm{2}{x_{S_{i+1} \setminus \{j\}}} \leq \abs{x_j} / (C'
  64\frac{B_i^3}{\delta_i^2}) = \abs{x_j} / (C'
  16\frac{B_{i+1}^2}{\delta_{i+1}^2})
  \]
  so the lemma applies in the next iteration as well, as desired.

  After $r$ iterations, we have $S_r \leq 1 + n / B_r^2 < 2$, so we
  have uniquely identified $j \in S_r$.  The probability that any
  iteration fails is at most $\sum \delta_i < 2\delta_0 = 1/2$.
\end{proof}

\subsection{$k$-sparse recovery}

Given a $1$-sparse recovery algorithm using $m$ measurements, one can
use subsampling to build a $k$-sparse recovery algorithm using $O(km)$
measurements and achieving constant success probability.  Our method
for doing so is quite similar to one used in~\cite{GLPS}. The main
difference is that, in order to identify one large coefficient among a
subset of coordinates, we use the adaptive algorithm from the previous
section as opposed to error-correcting codes.

For intuition, straightforward subsampling at rate $1/k$ will, with
constant probability, recover (say) 90\% of the heavy hitters using
$O(km)$ measurements.  This reduces the problem to $k/10$-sparse
recovery: we can subsample at rate $10/k$ and recover 90\% of the
remainder with $O(km/10)$ measurements, and repeat $\log k$ times.
The number of measurements decreases geometrically, for $O(km)$ total
measurements.  Naively doing this would multiply the failure
probability and the approximation error by $\log k$; however, we can
make the number of measurements decay less quickly than the sparsity.
This allows the failure probability and approximation ratios to also
decay exponentially so their total remains constant.

To determine the number of rounds, note that the initial set of
$O(km)$ measurements can be done in parallel for each subsampling, so
only $O(m)$ rounds are necessary to get the first 90\% of heavy
hitters.  Repeating $\log k$ times would require $O(m \log k)$ rounds.
However, we can actually make the sparsity in subsequent iterations
decay super-exponentially, in fact as a power tower.  This give $O(m
\log^* k)$ rounds.

\begin{theorem}\label{thm:adaptiverecovery}
  There exists an adaptive $(1+\eps)$-approximate $k$-sparse recovery
  scheme with $O(\frac{1}{\eps}k\log \frac{1}{\delta}\log \log
  (n\eps/k))$ measurements and success probability $1-\delta$.  It
  uses $O(\log^* k \log \log (n \eps))$ rounds.
\end{theorem}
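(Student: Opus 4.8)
The plan is to realize the reduction from $k$-sparse to $1$-sparse recovery by random subsampling, in the style of~\cite{GLPS}, but with the adaptive primitive of Lemma~\ref{lemma:onesparse} replacing the list-decoding step used there. I would begin with two standard reductions. First, to obtain the $\ell_2/\ell_2$ guarantee with factor $1+O(\eps)$ it is enough to produce a set $L\subseteq[n]$ with $|L|=O(k/\eps)$ that, with probability $1-O(\delta)$, contains every coordinate $j$ with $x_j^2\ge \Omega(\tfrac{\eps}{k})\err(x,k)$, together with estimates $\hat x_j$ of $x_j$ for $j\in L$ accurate to an additive $O(\sqrt{\eps/k})\,\norm{2}{x_{\overline{H_k(x)}}}$; the output is then the best $k$-sparse vector supported on $L$ and consistent with the $\hat x_j$, and a routine calculation shows its error is at most $(1+O(\eps))\err(x,k)$. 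Second, once $L$ is known the estimates can be produced by a single Count-Sketch-type batch of $O(\tfrac1\eps k\log\tfrac1\delta)$ non-adaptive measurements in one extra round, which is dominated by the cost of finding $L$. So the problem reduces to a $\Theta(k/\eps)$-sparse \emph{identification} problem, and the remaining work is to solve it adaptively.

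To build $L$, I would peel the ``target'' coordinates off over iterations $t=0,1,2,\dots$, maintaining an upper bound $\kappa_t$ on the number of targets not yet in $L$, with $\kappa_0=\Theta(k/\eps)$. In iteration $t$ I would take $\Theta(\kappa_t\log(1/\delta_t))$ fresh random subsamplings of the coordinates of the residual vector (the original $x$ with the coordinates already placed in $L$ subtracted off), each at a rate small enough that a still-missing target is, conditioned on surviving, isolated from the other surviving coordinates and dominates the surviving $\ell_2$ mass of its subsample by a large constant factor; on each subsample I would run the $O(\log\log(n\eps/k))$-measurement adaptive $1$-sparse recovery of Lemma~\ref{lemma:onesparse} (applied to the restricted, hence low-dimensional, coordinate set), boosted to failure probability $\delta_t$ by taking the plurality of $O(\log(1/\delta_t))$ independent copies. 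A Markov-type bound then shows that with good probability all but $\kappa_{t+1}$ of the still-missing targets get added to $L$ and that the $\ell_2$ mass of the targets lost in that iteration is small. Since the subsamples and the copies all run in parallel, iteration $t$ uses $O(\kappa_t\log(1/\delta_t)\log\log(n\eps/k))$ measurements and $O(\log\log(n\eps))$ rounds; we stop when $\kappa_t=0$.

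The crux — and the step I expect to be the main obstacle — is to choose the schedule $(\kappa_t,\delta_t)$ and the subsampling rates so that four budgets close simultaneously: (i) $\sum_t\kappa_t\log(1/\delta_t)=O(\tfrac1\eps k\log\tfrac1\delta)$, which yields the stated measurement bound; (ii) $\sum_t\kappa_t\delta_t=O(\delta)$, which bounds the total probability of ever losing a target; (iii) $\kappa_t$ is driven to $0$ within $O(\log^*(k/\eps))=O(\log^* k)$ iterations, which gives the $O(\log^* k\cdot\log\log(n\eps))$ round bound; and (iv) the $\ell_2$ mass contributed to the residual across all iterations — from targets not yet found and from targets that collided under subsampling — forms a geometric series summing to $O(\eps)\err(x,k)$, so the final approximation factor stays $1+O(\eps)$. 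As indicated in the discussion preceding the theorem, the device that makes this possible is to let the per-iteration measurement count shrink much more slowly than $\kappa_t$: since $\kappa_t$ can be made to decay as fast as a power tower, every iteration after the first carries a large surplus of measurements relative to $\kappa_t$, and spending that surplus on driving $\delta_t$ (hence $\kappa_{t+1}$) down is exactly what lets (ii), (iii) and (iv) hold without endangering (i). Rescaling $\eps$ and $\delta$ by constants at the end gives the theorem.
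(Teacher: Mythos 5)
Your plan matches the paper's proof essentially step for step: subsample at rate $\Theta(\eps/k)$ and run the adaptive $1$-sparse routine of Lemma~\ref{lemma:onesparse} in parallel $\Theta(\kappa_t\log(1/\delta_t))$ times per iteration (the paper's Lemmas~\ref{lemma:subsample} and~\ref{lemma:recursion}), then iterate $O(\log^* k)$ times with geometrically decaying $\eps_t,\delta_t$ and a sparsity parameter that decays as a power tower so that the measurement series converges while the error and failure budgets telescope. The only piece you leave implicit is the explicit schedule --- the paper takes $f_0=1/32$, $f_i=2^{-1/(4^i f_{i-1})}$, $k_i=k\prod_{j<i}f_j$ and checks directly that $\sum 2^i(i+1)f_i\log(1/f_i)\log\log(1/f_i)=O(1)$ --- and that verification goes through exactly as you anticipate.
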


To prove this, we start from the following lemma:

\begin{lemma}\label{lemma:subsample}
  We can perform $O(\log \log (n/k))$ adaptive measurements and
  recover an $\hat{\imath}$ such that, for any $j \in H_{k, 1/k}(x)$
  we have $\Pr[\hat{\imath} = j] = \Omega(1/k)$.
\end{lemma}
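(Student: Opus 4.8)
The plan is to reduce to Lemma~\ref{lemma:onesparse} by subsampling the coordinates at rate $\Theta(1/k)$: the point is that a fixed heavy hitter $j\in H_{k,1/k}(x)$ becomes, with constant probability, the \emph{only} surviving top-$k$ coordinate and still dominates the surviving tail, so the restricted vector meets the hypothesis of the $1$-sparse routine on a universe of size $O(n/k)$.

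Concretely, fix $j\in H_{k,1/k}(x)$ and let $T\subseteq[n]$ be a random set including each coordinate with probability $p=\alpha/k$, drawn from a pairwise independent hash family, where $\alpha$ is a small constant depending only on the constant $C$ of Lemma~\ref{lemma:onesparse}. The algorithm runs the adaptive $1$-sparse recovery procedure of Lemma~\ref{lemma:onesparse} on the $\abs{T}$-dimensional restriction $x_T$ (each measurement of $x_T$ is realized as a measurement of $x$ supported on $T$) and returns the resulting index $\hat{\imath}$; if $\abs{T}$ exceeds $100\alpha n/k$ it aborts with an arbitrary output.

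The main step is to check that the hypothesis of Lemma~\ref{lemma:onesparse} holds for $x_T$ with constant probability given $j\in T$. Three bad events can spoil it: (a) some other element of $H_k(x)$ lands in $T$, which by a union bound over the $\le k-1$ candidates has probability $\le (k-1)p<\alpha$; (b) the surviving tail is heavy --- since $\E[\norm{2}{x_{T\cap\overline{H_k(x)}}}^2\mid j\in T]=p\,\norm{2}{x_{\overline{H_k(x)}}}^2\le pk\,x_j^2=\alpha\,x_j^2$, where we used $x_j^2\ge\frac1k\norm{2}{x_{\overline{H_k(x)}}}^2$ (i.e.\ $j\in H_{k,1/k}(x)$), Markov's inequality bounds $\Pr[\,\norm{2}{x_{T\cap\overline{H_k(x)}}}^2>100\alpha\,x_j^2\mid j\in T\,]$ by $1/100$; and (c) $\abs{T}>100\alpha n/k$, which by Markov (using $\E[\abs{T}]\le\alpha n/k$) has probability $\le 1/100$. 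Off all three, $T\setminus\{j\}\subseteq\overline{H_k(x)}$, hence $\norm{2}{x_{T\setminus\{j\}}}^2\le 100\alpha\,x_j^2$; choosing $\alpha\le 1/(100C^2)$ yields $\abs{x_j}\ge C\norm{2}{x_{T\setminus\{j\}}}$, which is exactly the hypothesis of Lemma~\ref{lemma:onesparse} for the vector $x_T$ of dimension $\abs{T}=O(n/k)$. A union bound shows this good event $\mathcal G$ has $\Pr[\mathcal G\mid j\in T]\ge 9/10$.

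Finally, conditioned on $\mathcal G$, Lemma~\ref{lemma:onesparse} uses $O(\log\log\abs{T})=O(\log\log(n/k))$ adaptive measurements and returns $\hat{\imath}=j$ with probability $\ge 1/2$, using internal randomness independent of $T$; therefore
\[
\Pr[\hat{\imath}=j]\;\ge\;\Pr[j\in T]\cdot\Pr[\mathcal G\mid j\in T]\cdot\tfrac12\;\ge\;\frac{\alpha}{k}\cdot\frac{9}{10}\cdot\frac12\;=\;\Omega(1/k),
\]
with measurement count $O(\log\log(n/k))$, as claimed. The only step that needs genuine care is item (b): it is precisely the $1/k$‑heaviness built into $H_{k,1/k}(x)$ that cancels the $\Theta(1/k)$ subsampling rate and keeps the post‑subsampling signal‑to‑noise ratio bounded below by a constant, which is what Lemma~\ref{lemma:onesparse} requires; the rest is bookkeeping of constants, and pairwise independence of the sampling suffices throughout.
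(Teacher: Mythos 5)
Your proposal is correct and follows essentially the same route as the paper's proof: subsample at rate $\Theta(1/k)$, argue that with probability $\Omega(1/k)$ the fixed heavy hitter $j$ survives while no other top-$k$ coordinate does and the surviving tail is small (using precisely the $1/k$-heaviness to cancel the sampling rate), and then invoke Lemma~\ref{lemma:onesparse} on the $O(n/k)$-sized restriction. The only cosmetic differences are that the paper samples coordinates fully independently and computes $\Pr[T\cap H_k(x)=\{j\}]=p(1-p)^{k-1}>p/e$ directly, whereas you condition on $j\in T$ and union-bound the bad events, which lets you get away with pairwise independence.
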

\begin{proof}
  Let $S = H_k(x)$.  Let $T \subset [n]$ contain each element
  independently with probability $p = 1/(4C^2k)$, where $C$ is the
  constant in Lemma~\ref{lemma:onesparse}.  Let $j \in H_{k,1/k}(x)$.
  Then we have
  \[
  \E[\norm{2}{x_{T \setminus S}}^2] = p\norm{2}{x_{\overline{S}}}^2
  \]
  so
  \[
  \norm{2}{x_{T \setminus S}} \leq \sqrt{4p}\norm{2}{x_{\overline{S}}} =  \frac{1}{C\sqrt{k}}\norm{2}{x_{\overline{S}}} \leq \abs{x_j} / C
  \]
  with probability at least $3/4$. Furthermore we have $\E[\abs{T
      \setminus S}] < pn$ so $\abs{T \setminus S} < n/k$ with
  probability at least $1 - 1/(4C^2) > 3/4$.  By the union bound, both
  these events occur with probability at least $1/2$.

  Independently of this, we have
  \[
  \Pr[T \cap S = \{j\}] = p (1-p)^{k-1} > p/e
  \]
  so all these events hold with probability at least $p/(2e)$.  Assuming this,
  \[
  \norm{2}{x_{T \setminus \{j\}}} \leq \abs{x_j} / C
  \]
  and $\abs{T} \leq 1 + n/k$.  But then Lemma~\ref{lemma:onesparse}
  applies, and $O(\log \log \abs{T}) = O(\log \log (n/k))$
  measurements can recover $j$ from a sketch of $x_T$ with probability
  $1/2$.  This is independent of the previous probability, for a total
  success chance of $p / (4e) = \Omega(1/k)$.
\end{proof}

\begin{lemma}\label{lemma:recursion}
  With $O(\frac{1}{\eps}k \log \frac{1}{f\delta}\log \log (n\eps/k))$
  adaptive measurements, we can recover $T$ with $\abs{T} \leq k$ and
  \[
  \err(x_{\overline{T}}, f k) \leq (1 + \eps) \err(x, k)
  \]
  with probability at least $1-\delta$.  The number of rounds required
  is $O(\log \log (n \eps / k))$.
\end{lemma}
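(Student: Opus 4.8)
\emph{Approach.}
The plan is to reduce to Lemma~\ref{lemma:subsample}: run it many times in parallel at a suitably inflated sparsity, collect the reported coordinates into a pool, and then prune the pool to size $k$ by magnitude. Write $\err = \err(x,k)$, and assume $n\eps/k$ exceeds a large constant (otherwise $n = O(k/\eps)$ direct measurements recover $x$ exactly). Define the target set $T^* = \{\, j \in H_k(x) \mid x_j^2 \ge \frac{\eps}{k}\err \,\}$, so $\abs{T^*}\le k$. Since $\abs{H_k(x)\setminus T^*}\le k$ and each such coordinate contributes less than $\frac{\eps}{k}\err$, we get $\norm{2}{x_{\overline{T^*}}}^2 = \norm{2}{x_{H_k(x)\setminus T^*}}^2 + \err \le (1+\eps)\err$; this is the slack we will exploit. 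A key preliminary observation is that every $j\in T^*$ lies in $H_{K,1/K}(x)$ for $K := k/\eps$, because $H_k(x)\subseteq H_K(x)$ and $\err(x,K)\le\err$, whence $x_j^2 \ge \frac{\eps}{k}\err \ge \frac1K\norm{2}{x_{\overline{H_K(x)}}}^2$.

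Next I would run $R = \Theta\!\big(\frac{k}{\eps}\log\frac1{f\delta}\big)$ independent copies of the procedure of Lemma~\ref{lemma:subsample}, each invoked with sparsity parameter $K = k/\eps$, and let $C$ be the set of coordinates returned. Run in parallel, this costs $O\!\big(R\log\log(n/K)\big) = O\!\big(\frac1\eps k\log\frac1{f\delta}\log\log(n\eps/k)\big)$ measurements over $O(\log\log(n\eps/k))$ rounds. By the preliminary observation and Lemma~\ref{lemma:subsample}, any fixed $j\in T^*$ is reported by a given copy with probability $\Omega(1/K) = \Omega(\eps/k)$, so $\Pr[j\notin C] \le (1-\Omega(\eps/k))^R \le f\delta$ for a large enough constant in $R$; hence $\E[\abs{T^*\setminus C}]\le f\delta k$ and, by Markov's inequality, $\abs{T^*\setminus C}\le fk$ with probability at least $1-\delta$. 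Then, in one additional round, I would measure $x_i$ exactly for every $i\in C$ ($\abs{C}\le R$ extra measurements, a lower-order term) and let $T$ be the $\min(\abs{C},k)$ coordinates of $C$ of largest magnitude, so $\abs{T}\le k$.

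It remains to verify that, on the event $\abs{T^*\setminus C}\le fk$, we have $\err(x_{\overline{T}},fk)\le(1+\eps)\err$. Put $P = T^*\setminus C$ and $Q = (T^*\cap C)\setminus T$, so $T^*\setminus T = P\sqcup Q$, with $\abs{P}\le fk$ and $P\subseteq\overline{T}$ (as $T\subseteq C$). Since $\abs{P}\le fk$, the mass of the top $fk$ coordinates of $x_{\overline{T}}$ is at least $\norm{2}{x_P}^2$, and the symmetric-difference identity $\norm{2}{x_{\overline{T}}}^2 - \norm{2}{x_{T^*\setminus T}}^2 = \norm{2}{x_{\overline{T^*}}}^2 - \norm{2}{x_{T\setminus T^*}}^2$ gives $\err(x_{\overline{T}},fk) \le \norm{2}{x_{\overline{T}}}^2 - \norm{2}{x_P}^2 = \norm{2}{x_{\overline{T^*}}}^2 + \norm{2}{x_Q}^2 - \norm{2}{x_{T\setminus T^*}}^2$. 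Finally, each $j\in Q$ was squeezed out of the top $k$ of $C$, so $\abs{x_j}$ is at most the magnitude of every coordinate of $T$, in particular of $T\setminus T^*$; and a counting check using $\abs{T^*\cap C}\le k$ shows $\abs{Q}\le\abs{T\setminus T^*}$, so $\norm{2}{x_Q}^2\le\norm{2}{x_{T\setminus T^*}}^2$. Combining, $\err(x_{\overline{T}},fk)\le\norm{2}{x_{\overline{T^*}}}^2\le(1+\eps)\err$, as claimed; the round and measurement bounds were noted above.

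The main obstacle is arranging the constant-factor slack exactly, rather than losing a factor of $2$. This is what forces the threshold $\frac{\eps}{k}\err$ in the definition of $T^*$, hence the inflated sparsity $K=k/\eps$ fed to Lemma~\ref{lemma:subsample} — the source of the $\frac1\eps$ in the measurement count — and it is what makes the pruning step delicate: a heavy coordinate that is either never recovered or later evicted must be charged, respectively, against the subtracted top-$fk$ terms ($P$) or against a coordinate we did keep that is at least as large and at least as numerous ($Q$ versus $T\setminus T^*$). By contrast, the step turning a small expected number of missed heavy coordinates into "at most $fk$ misses with probability $1-\delta$" is just Markov's inequality, and is exactly what produces the $\log\frac1{f\delta}$ factor.
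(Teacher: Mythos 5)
Your proof is correct and follows essentially the same route as the paper's: define the target set $T^*=H_{k,\eps/k}(x)$, note it sits inside $H_{k/\eps,\eps/k}(x)$, run Lemma~\ref{lemma:subsample} with inflated sparsity $k/\eps$ some $O(\frac{k}{\eps}\log\frac{1}{f\delta})$ times in parallel, apply Markov's inequality, observe the pool directly, and keep the top $k$. The only (cosmetic) divergence is in the last accounting step: the paper simply notes that no element of $H_{k,\eps/k}(x)$ in the pool can be evicted by the pruning, since all such elements lie in $H_k(x)$, whereas you charge the evicted set $Q$ against $T\setminus T^*$ via an exchange argument; both give $\err(x_{\overline{T}},fk)\le\|x_{\overline{T^*}}\|_2^2\le(1+\eps)\err(x,k)$.
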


\begin{proof}
  Repeat Lemma~\ref{lemma:subsample} $m = O(\frac{1}{\eps}k \log
  \frac{1}{f \delta})$ times in parallel with parameters $n$ and
  $k/\eps$ to get coordinates $T' = \{t_1, t_2, \dotsc, t_m\}$.  For
  each $j \in H_{k,\eps/k}(x) \subseteq H_{k/\eps,\eps/k}(x)$ and $i
  \in [m]$, the lemma implies $\Pr[j = t_i] \geq \eps/(Ck)$ for some
  constant $C$.  Then $\Pr[j \notin T'] \leq (1 - \eps/(Ck))^m \leq
  e^{-\eps m/(Ck)} \leq f\delta$ for appropriate $m$.  Thus
  \begin{align*}
    \E[\abs{H_{k,\eps/k}(x) \setminus T'}] \leq f\delta\abs{H_{k,\eps/k}(x)} \leq& f\delta k\\
    \Pr\left[\abs{H_{k,\eps/k}(x) \setminus T'} \geq f k\right] \leq& \delta.
  \end{align*}

  Now, observe $x_{T'}$ directly and set $T \subseteq T'$ to be the
  locations of the largest $k$ values.  Then, since $H_{k,\eps/k}(x)
  \subseteq H_k(x)$, $\abs{H_{k,\eps/k}(x) \setminus T} =
  \abs{H_{k,\eps/k}(x) \setminus T'} \leq f k$ with probability at
  least $1-\delta$.

  Suppose this occurs, and let $y = x_{\overline{T}}$.  Then
  \begin{align*}
    \err(y, f k) =& \min_{\abs{S} \leq f k} \norm{2}{y_{\overline{S}}}^2\\
    \leq& \norm{2}{y_{\overline{H_{k,\eps/k}(x) \setminus T}}}^2\\
    =& \norm{2}{x_{\overline{H_{k,\eps/k}(x)}}}^2\\
    =& \norm{2}{x_{\overline{H_k(x)}}}^2 + \norm{2}{x_{H_k(x) \setminus H_{k,\eps/k}(x)}}^2\\
    \leq& \norm{2}{x_{\overline{H_k(x)}}}^2 + k\norm{\infty}{x_{H_k(x) \setminus H_{k,\eps/k}(x)}}^2\\
    \leq& (1 + \eps)\norm{2}{x_{\overline{H_k(x)}}}^2\\
    =& (1+\eps)\err(x, k)
  \end{align*}
  as desired.
\end{proof}

\begin{algorithm}
  \caption{Adaptive $k$-sparse recovery}\label{alg:adaptivek}
  \begin{algorithmic}
    \Procedure{AdaptiveKSparseRec}{$x$, $k$, $\eps$, $\delta$}\Comment{Recover approximation $\hat{x}$ of $x$}
    \State $R_0 \gets [n]$
    \State $\delta_0 \gets \delta / 2$, $\eps_0 \gets \eps / e$, $f_0 \gets 1/32$, $k_0 \gets k$.
    \State $J \gets \{\}$
    \For{$i \gets 0, \dotsc, O(\log^* k)$}\Comment{While $k_i \geq 1$}
    \For{$t \gets 0, \dotsc, \Theta(\frac{1}{\eps_i}k_i\log \frac{1}{\delta_i})$}
    \State $S_t \gets \textsc{Subsample}(R_i, \Theta(\eps_i/k_i))$
    \State $J.\text{add}(\textsc{AdaptiveOneSparseRec}(x_{S_t}))$
    \EndFor
    \State $R_{i+1} \gets [n] \setminus J$
    \State $\delta_{i+1} \gets \delta_i / 8$
    \State $\eps_{i+1} \gets \eps_i / 2$
    \State $f_{i+1} \gets 1/2^{1/(4^{i+1}f_i)}$
    \State $k_{i+1} \gets k_i f_i$
    \EndFor
    \State $\hat{x} \gets x_J$ \Comment{Direct observation}
    \State \Return $\hat{x}$
    \EndProcedure
  \end{algorithmic}
\end{algorithm}

\begin{theorem}
  We can perform $O(\frac{1}{\eps}k \log \frac{1}{\delta} \log \log
  (n\eps/k))$ adaptive measurements and recover a set $T$ of size at most $2k$
  with
  \[
  \norm{2}{x_{\overline{T}}} \leq (1+\eps) \norm{2}{x_{\overline{H_k(x)}}}.
  \]
  with probability $1-\delta$.  The number of rounds required is
  $O(\log^* k \log \log (n\eps))$.
\end{theorem}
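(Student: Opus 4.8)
The plan is to prove this by iterating Lemma~\ref{lemma:recursion}, peeling off fewer and fewer heavy hitters in each outer round, following the parameter schedule of Algorithm~\ref{alg:adaptivek}. Set $\eps_0=\eps/e$, $\delta_0=\delta/2$, $f_0=1/32$, $k_0=k$, and for $i\ge 0$ let $\eps_{i+1}=\eps_i/2$, $\delta_{i+1}=\delta_i/8$, $f_{i+1}=2^{-1/(4^{i+1}f_i)}$, and $k_{i+1}=f_ik_i$. Starting from $R_0=[n]$, at round $i$ (run as long as $k_i\ge 1$) apply Lemma~\ref{lemma:recursion} to $x_{R_i}$ with sparsity $k_i$, accuracy $\eps_i$, decay $f_i$, and failure probability $\delta_i$: it returns $T_i\subseteq R_i$ with $\abs{T_i}\le k_i$, reveals the entries $x_{T_i}$, and on success satisfies $\err(x_{R_i\setminus T_i},f_ik_i)\le(1+\eps_i)\,\err(x_{R_i},k_i)$. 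Set $R_{i+1}=R_i\setminus T_i$, stop at the first $r$ with $k_r<1$, and return $T=\bigcup_{i<r}T_i$ (all of whose entries have been observed). Fresh randomness is used each round, so the analysis reduces to bounding the accumulated error, the measurement count, the failure probability, and the number of rounds.

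For the error I would induct on $i$: conditioned on all preceding rounds succeeding, $\err(x_{R_i},k_i)\le\bigl(\prod_{j<i}(1+\eps_j)\bigr)\,\err(x,k)$, the step being exactly Lemma~\ref{lemma:recursion} together with $k_{i+1}=f_ik_i$. At termination $k_r<1$, so $H_{k_r}(x_{R_r})=\emptyset$ and $\err(x_{R_r},k_r)=\norm{2}{x_{R_r}}^2=\norm{2}{x_{\overline T}}^2$; since $\sum_i\eps_i=2\eps/e$, one checks $\prod_i(1+\eps_i)\le e^{2\eps/e}\le(1+\eps)^2$ (valid for $\eps$ below an absolute constant, the only regime of interest), giving $\norm{2}{x_{\overline T}}\le(1+\eps)\norm{2}{x_{\overline{H_k(x)}}}$. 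For the size bound, each $f_j\le 1/32$ forces $k_i\le 32^{-i}k$, so $\abs{T}\le\sum_i\abs{T_i}\le\sum_ik_i\le k\sum_{i\ge 0}32^{-i}<2k$.

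It remains to count measurements, failures, and rounds. Writing $P_i=\prod_{j<i}f_j$ (so $k_i=kP_i$ and $\frac{1}{\eps_i}k_i=\frac{e2^i}{\eps}kP_i$), round $i$ uses $O(\frac{1}{\eps_i}k_i\log\frac{1}{f_i\delta_i}\log\log(n\eps_i/k_i))$ measurements. From $f_{i+1}=2^{-1/(4^{i+1}f_i)}$ one gets $\log_2\frac{1}{f_i}=\frac{1}{4^if_{i-1}}$, and combined with $P_i=P_{i-1}f_{i-1}$ this yields $2^iP_i\log\frac{1}{f_i}=O(P_{i-1}2^{-i})$, so $\sum_i2^iP_i\log\frac{1}{f_i}=O(1)$; similarly $\log\frac{1}{\delta_i}=O(i+\log\frac{1}{\delta})$ and $P_i\le 32^{-i}$ give $\sum_i2^iP_i\log\frac{1}{\delta_i}=O(\log\frac{1}{\delta})$. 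Since $k_i\ge 1$ on every round, $n\eps_i/k_i\le n\eps$, so every $\log\log$ factor is $O(\log\log(n\eps))$ while the $i=0$ term already carries $\log\log(n\eps/k)$; the resulting (routine) summation is $O(\frac{1}{\eps}k\log\frac{1}{\delta}\log\log(n\eps/k))$. The failure probability is at most $\sum_i\delta_i=\sum_i\delta/(2\cdot 8^i)<\delta$. Each round uses $O(\log\log(n\eps_i/k_i))=O(\log\log(n\eps))$ rounds of adaptivity by Lemma~\ref{lemma:recursion}, and the number of rounds is $O(\log^* k)$ because the recursion $\log_2(1/f_{i+1})=4^{-(i+1)}(1/f_i)$ makes $1/f_i$ iterate essentially $a\mapsto 2^a$, so $1/P_i=\prod_{j<i}(1/f_j)$ exceeds $k$ after $O(\log^* k)$ steps; hence $O(\log^* k\cdot\log\log(n\eps))$ rounds overall.

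The one delicate point is the measurement summation: the factors $\log(1/f_i)$ individually grow like a power tower, and one has to verify that the super-exponentially decaying weights $2^iP_i$ absorb them precisely, leaving an $O(1)$ contribution on top of the unavoidable $O(\log(1/\delta))$ term, so that no stray $\log k$ factor survives in the final bound. The error accumulation is a plain telescoping product, the failure bound a union bound, and the round count is immediate from the per-call guarantee of Lemma~\ref{lemma:recursion} and the power-tower growth of $1/f_i$.
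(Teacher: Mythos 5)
Your construction is exactly the paper's: the same parameter schedule $\eps_i,\delta_i,f_i,k_i$, the same iteration of Lemma~\ref{lemma:recursion}, the same telescoping product for the error, the same union bound, and the same power-tower argument for $r=O(\log^* k)$. All of those parts are fine (your $\delta_i=\delta/(2\cdot 8^i)$ versus the proof's $\delta/(2\cdot 2^i)$ is immaterial, and $\prod(1+\eps_i)\le e^{2\eps/e}\le (1+\eps)^2$ gives the approximation factor on the norm just as the paper's $1+2\eps$ on the squared norm does). The one step that does not deliver the stated bound is the measurement count. For $i\ge 1$ you bound $\log\log(n\eps_i/k_i)$ by $O(\log\log(n\eps))$ using $k_i\ge 1$, and then sum against weights whose total is $\Theta(1)\cdot\frac{k}{\eps}\log\frac{1}{\delta}$ (already the $i=1$ term contributes $\Theta(\frac{k}{\eps}\log\frac{1}{\delta})$ of weight). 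That chain of inequalities yields $O(\frac{1}{\eps}k\log\frac{1}{\delta}\log\log(n\eps))$, not $O(\frac{1}{\eps}k\log\frac{1}{\delta}\log\log(n\eps/k))$; these differ whenever $k$ is large relative to $n\eps/k$ (e.g.\ $n\eps=2k$ makes the target factor $O(1)$ while your bound is $\Theta(\log\log k)$). Observing that ``the $i=0$ term already carries $\log\log(n\eps/k)$'' does not repair this, because the $\log\log(n\eps)$ factor still multiplies $\Omega(1)$ of the weight.

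The fix is the one the paper uses: write $\log(n\eps_i/k_i)\le \log(n\eps/k)+\log(k/k_i)+O(i)$ and apply $\log(a+b)\le 2\log a\log b$ for $a,b\ge e$, so that $\log\log(n\eps_i/k_i)=O\bigl(\log\log(k/k_i)\cdot\log\log(n\eps/k)\bigr)$. The extra factor $\log\log(k/k_i)=\log\log(1/P_i)$ does grow like a power tower, but it is absorbed by the weights you already computed: $2^iP_i\log(1/f_i)=\Theta(2^{-i}P_{i-1})\le 2^{-i}f_{i-2}$ with $f_{i-2}=2^{-\alpha_{i-3}}$, while $\log\log(1/P_i)=O(\alpha_{i-3}+\log i)$, so each weighted term is $O(2^{-i}\sqrt{f_{i-2}})$ and the sum over $i\ge 1$ is $O(1)$. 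With that substitution your summation closes and the clean $\log\log(n\eps/k)$ factor survives; everything else in your write-up matches the paper's proof.
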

\begin{proof}
  Define $\delta_i = \frac{\delta}{2 \cdot 2^i}$ and $\eps_i =
  \frac{\eps}{e\cdot 2^i}$.  Let $f_0 = 1/32$ and $f_i = 2^{-1 / (4^i f_{i-1})}$
  for $i > 0$, and define $k_i = k\prod_{j < i} f_j$.  Let $R_0 = [n]$.

  Let $r = O(\log^* k)$ such that $f_{r-1} < 1/k$.  This is possible
  since $\alpha_i = 1/(4^{i+1} f_i)$ satisfies the recurrence
  $\alpha_0 = 8$ and $\alpha_i = 2^{\alpha_{i-1} - 2i - 2} >
  2^{\alpha_{i-1} / 2}$.  Thus $\alpha_{r-1} > k$ for $r = O(\log^*
  k)$ and then $f_{r-1} < 1/\alpha_{r-1} < 1/k$.

  For each round $i = 0, \dotsc, r-1$, the algorithm runs
  Lemma~\ref{lemma:recursion} on $x_{R_i}$ with parameters $\eps_i$,
  $k_i$, $f_i$, and $\delta_i$ to get $T_i$.  It sets $R_{i+1} = R_i
  \setminus T_i$ and repeats.  At the end, it outputs $T = \cup T_i$.

  The total number of measurements is
  \begin{align*}
    O(\sum  \frac{1}{\eps_i} k_i \log \frac{1}{f_i\delta_i} \log \log (n \eps_i/ k_i))
\ifconf\\\fi
    \leq& O(\sum \frac{2^i (k_i/k)\log (1/f_i)}{\eps} k (i + \log \frac{1}{\delta})
 \log (\log (k/k_i) + \log (n\eps/k)))\\
    \leq& O(\frac{1}{\eps} k \log \frac{1}{\delta} \log \log (n\eps/k)
\sum 2^i (k_i/k) \log (1/f_i) (i+1) \log \log (k/k_i))
   \end{align*}
   using the very crude bounds $i + \log (1/\delta) \leq
   (i+1)\log(1/\delta)$ and $\log (a + b) \leq 2\log a \log b$ for $a,
   b \geq e$.  But then
   \begin{align*}
     \sum 2^i (k_i/k) \log (1/f_i) (i+1) \log \log (k/k_i) 
\ifconf\\\fi
\leq& \sum 2^i (i+1)  f_i \log (1/f_i) \log \log (1/f_i)\\
     \leq& \sum 2^i (i+1) O(\sqrt{f_i})\\
     =& O(1)
  \end{align*}
  since $f_i < O(1/16^i)$, giving $O(\frac{1}{\eps} k \log
  \frac{1}{\delta} \log \log (n\eps/k)$ total measurements.  The
  probability that any of the iterations fail is at most $\sum
  \delta_i < \delta$.  The result has size $\abs{T} \leq \sum k_i \leq
  2k$.  All that remains is the approximation ratio
  $\norm{2}{x_{\overline{T}}} = \norm{2}{x_{R_r}}$.

  For each $i$, we have
  \begin{align*}
    \err(x_{R_{i+1}}, k_{i+1}) =& \err(x_{R_i \setminus T_i}, f_i k_i)
    \ifconf\\\fi
    \leq
    (1 + \eps_i) \err(x_{R_i}, k_i).
  \end{align*}
  Furthermore, $k_r < k f_{r-1} < 1$.  Hence
  \begin{align*}
    \norm{2}{x_{R_r}}^2 = \err(x_{R_r}, k_r) \leq&
    \left(\prod_{i=0}^{r-1} (1 + \eps_i)\right)\err(x_{R_0}, k_0)
    \ifconf\\\fi
    =
    \left(\prod_{i=0}^{r-1} (1 + \eps_i)\right)\err(x, k)
  \end{align*}
  But $\prod_{i=0}^{r-1} (1 + \eps_i) < e^{\sum \eps_i} < e$, so
  \[
  \prod_{i=0}^{r-1} (1 + \eps_i) < 1 + \sum e\eps_i \leq 1 + 2\eps
  \]
  and hence
  \[
  \norm{2}{x_{\overline{T}}}  = \norm{2}{x_{R_r}} \leq (1+\eps) \norm{2}{x_{\overline{H_k(x)}}}
  \]
  as desired.
\end{proof}

Once we find the support $T$, we can observe $x_T$ directly with
$O(k)$ measurements to get a $(1+\eps)$-approximate $k$-sparse recovery
scheme, proving Theorem~\ref{thm:adaptiverecovery}

\section{Two-round adaptivity}
\label{s:tworound}

The algorithms in this section are invariant under permutation.
Therefore, for simplicity of notation, the analysis assumes our
vectors $x$ is sorted: $\abs{x_1} \geq \dotsc \geq \abs{x_n} = 0$.

We are given a $1$-round $k$-sparse recovery algorithm for
$n$-dimensional vectors $x$ using $m(k,\eps, n, \delta)$ measurements
with the guarantee that its output $\hat{x}$ satisfies
$\|\hat{x}-x\|_p \leq (1+\eps) \cdot \|x_{\overline{[k]}}\|_p$ for a
$p \in \{1,2\}$ with probability at least $1-\delta$.  Moreover,
suppose its output $\hat{x}$ has support on a set of size $s(k,\eps,
n, \delta)$.  We show the following black box two-round
transformation.
\begin{theorem}\label{thm:tworound}
Assume $s(k, \eps, n, \delta) = O(k)$.  Then there is a $2$-round
sparse recovery algorithm for $n$-dimensional vectors $x$, which, in
the first round uses $m(k, \eps/5, \poly(k/\eps), 1/100)$ measurements
and in the second uses $ O(k \cdot m(1, 1, n, \Theta(1/k)))$ measurements.
It succeeds with constant probability.
\end{theorem}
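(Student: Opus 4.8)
The plan is a hashing reduction: in the first round compress the $n$-dimensional instance to a $\poly(k/\eps)$-dimensional one and hand it to the given non-adaptive algorithm as a black box; in the second round ``unhash'' each recovered coordinate with a cheap $1$-sparse recovery. As in the section preamble, assume $x$ is sorted, so $H_k(x)=[k]$.

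\emph{First round.} Fix $B=\poly(k/\eps)$ (a sufficiently large polynomial, pinned down below), draw $h\colon[n]\to[B]$ and $\sigma\colon[n]\to\{\pm1\}$ from $O(1)$-wise independent families, and set $y\in\R^B$, $y_b=\sum_{h(i)=b}\sigma(i)x_i$. Since $y=Hx$ for a known random matrix $H$, measuring $y$ costs the same as measuring $x$; run the black box on $y$ with parameters $(k,\eps/5,B,1/100)$, yielding $\hat y$ with $U:=\supp(\hat y)$, $|U|=s(k,\eps/5,B,1/100)=O(k)$, and — as $\hat y$ vanishes off $U$ — $\norm{p}{y_{\overline U}}\le\norm{p}{\hat y-y}\le(1+\eps/5)\norm{p}{y_{\overline{[k]}}}$. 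The cost is $m(k,\eps/5,\poly(k/\eps),1/100)$. The facts about $H$ I would establish (all holding simultaneously with probability $\ge 99/100$, which is exactly where the degree of $B$ is spent): (i) $[k]$ hashes to distinct buckets; (ii) for every $j\in[k]$, $\norm{p}{x_{h^{-1}(h(j))\setminus\{j\}}}^p\le\tau:=O(k/B)\,\norm{p}{x_{\overline{[k]}}}^p$; (iii) each of the $\le k/\eps$ ``big'' tail coordinates (those with $|x_i|^p\ge\eps\,\norm{p}{x_{\overline{[k]}}}^p/k$) is alone among big coordinates in its bucket; and (iv) $\norm{p}{y_{\overline{[k]}}}^p\le(1+\eps/100)\,\norm{p}{x_{\overline{[k]}}}^p$. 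Here (iv) is the triangle inequality for $p=1$ and a second-moment computation for $p=2$, while (i)–(iii) are birthday/Markov estimates; together they say the hash preserves the relevant structure of $x$ and manufactures no spurious heavy hitter.

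\emph{Second round and output.} For each $b\in U$ run the black box with parameters $(1,1,|h^{-1}(b)|,\Theta(1/k))$ on $x$ restricted to $P_b:=h^{-1}(b)$, getting a $1$-sparse $\hat z^{(b)}$ supported on some $\ell_b\in P_b$ with $\norm{p}{\hat z^{(b)}-x_{P_b}}\le 2\nu_b$, where $\nu_b:=\min_{i\in P_b}\norm{p}{x_{P_b\setminus\{i\}}}$. This costs $O(k)\cdot m(1,1,n,\Theta(1/k))$, and since each of the $O(k)$ calls fails with probability $\Theta(1/k)$, a union bound absorbs only a constant into the failure probability. Output $\hat x$ on $V:=\{\ell_b:b\in U\}$ (so $|V|=O(k)$) with $\hat x_{\ell_b}=\hat z^{(b)}_{\ell_b}$. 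For the analysis, write
\[
\norm{p}{\hat x-x}^p=\Big(\sum_{b\in U}|\hat z^{(b)}_{\ell_b}-x_{\ell_b}|^p+\sum_{b\in U}\norm{p}{x_{P_b\setminus\{\ell_b\}}}^p\Big)+\sum_{b\notin U}\norm{p}{x_{P_b}}^p .
\]
The parenthesized part equals $\sum_{b\in U}\norm{p}{\hat z^{(b)}-x_{P_b}}^p\le 2^p\sum_{b\in U}\nu_b^p$; by (i) a bucket containing a top-$k$ coordinate has that coordinate as its maximum (using that $x$ is sorted), so $\nu_b^p\le\tau$ for those, and for the remaining $b\in U$ property (iii) forces $\nu_b^p$ to be made up only of small tail coordinates, which a Markov bound (conditioning on the placement of the big coordinates, which fixes $U$ up to lower-order effects) controls, giving $\sum_{b\in U}\nu_b^p\le(\eps/10)\,\norm{p}{x_{\overline{[k]}}}^p$ once $B$ is polynomially large. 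Finally $\sum_{b\notin U}\norm{p}{x_{P_b}}^p$ splits into the tail coordinates of $x$ — contributing at most $\norm{p}{x_{\overline{[k]}}}^p$ — and the top-$k$ coordinates whose bucket $\hat y$ dropped; but dropping a bucket of mass $\mu^p$ costs the black box essentially $\mu^p$ in $\ell_p$ error (by (ii) its in-bucket noise is only $\tau$), so, against the black box's slack $((1+\eps/5)^p-1)\,\norm{p}{y_{\overline{[k]}}}^p=O(\eps)\,\norm{p}{x_{\overline{[k]}}}^p$ and using (iv), these dropped masses total $O(\eps)\,\norm{p}{x_{\overline{[k]}}}^p$. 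Adding up gives $\norm{p}{\hat x-x}\le(1+\eps)\,\norm{p}{x_{\overline{[k]}}}$.

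\emph{Main obstacle.} The crux is keeping the error at $(1+\eps)$ rather than a constant multiple of $\norm{p}{x_{\overline{[k]}}}$: the black box is only an $\ell_p/\ell_p$ approximator, so its support $U$ may both omit buckets holding marginal heavy hitters and include $\Omega(k)$ ``junk'' buckets built from tail mass. Omitted heavy-hitter buckets are handled by the ``dropping is expensive'' argument above; junk buckets are handled by (ii)–(iii) plus the size of $B$ — a junk bucket either carries tiny mass, so $O(k)$ of them are negligible, or is dominated by one big tail coordinate, so $\nu_b$ is tiny and recovering it only helps. Getting all the $\eps$-losses to telescope — and thereby fixing the exponent in $B=\poly(k/\eps)$ — is where the real effort lies; the probabilistic bookkeeping (the $O(1)$ structural events of the first round and the union bound over the $O(k)$ second-round recoveries) is routine.
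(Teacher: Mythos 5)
Your architecture is the paper's: hash to $N=\poly(k/\eps)$ buckets with random signs, run the black box on the compressed vector $y$ to locate $O(k)$ heavy buckets, then invert the hash with $O(k)$ constant-factor $1$-sparse recoveries, charging dropped heavy-hitter buckets to the black box's $O(\eps)\norm{p}{y_{\overline{[k]}}}^p$ slack. (The paper assigns values from the first-round estimate $\hat y$ rather than from the second-round recovery, but that difference is cosmetic.)

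The genuine gap is your bound on $\sum_{b\in U}\nu_b^p$ over the junk buckets. Your toolkit --- $O(1)$-wise independence plus ``birthday/Markov estimates'' --- controls the \emph{expected} residual mass of a bucket and lets you union-bound over the $k$ buckets $h([k])$ that you can name in advance; it says nothing about the buckets in $U$. The set $U$ is the support of the output of an arbitrary black box run on $y$, so it depends on the entire hash function, including where the small tail coordinates landed; your parenthetical ``conditioning on the placement of the big coordinates, which fixes $U$ up to lower-order effects'' is unjustified and is where the argument breaks. The black box may systematically return exactly those buckets in which small tail coordinates piled up, and since the total small-tail mass across all buckets is $\norm{p}{x_{\overline{[k]}}}^p$, a first-moment bound cannot prevent the sum over an adversarially chosen set of $O(k)$ buckets from being $\Omega(\norm{p}{x_{\overline{[k]}}}^p)$ --- which turns your $1+\eps$ into a constant. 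What is needed, and what the paper proves as the first condition of Lemma~\ref{lem:3cond}, is a \emph{uniform} high-probability bound $\norm{p}{Z(j)}\le r=O(\norm{p}{x_{\overline{[k]}}}\log N/N^{1/6})$ holding for \emph{every} bucket $j\in[N]$ simultaneously, so that $\sum_{b\in U}\nu_b^p\le O(k)\,r^p=O(\eps)\norm{p}{x_{\overline{[k]}}}^p$ no matter which buckets the black box selects. Driving the per-bucket failure probability down to $N^{-2}$ so it survives a union bound over all $N$ buckets is precisely why the paper takes $h$ and $\sigma$ to be $\Theta(\log N)$-wise independent and invokes the higher-moment tail bound (Fact~\ref{fact:chernoff}) and Khintchine (Fact~\ref{fact:khintchine}); Chebyshev or Markov with pairwise independence cannot deliver it. The same uniform bound is also what controls the error when a second-round invocation returns the wrong coordinate of a selected bucket. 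Once your conditions (i)--(iii) are replaced by this uniform per-bucket residual bound, the rest of your accounting goes through along the lines of the paper's.
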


\begin{corollary}\label{cor:tworound}
  For $p = 2$, there is a $2$-round sparse recovery algorithm for 
$n$-dimensional
  vectors $x$ such that the total number of measurements is
  $O(\frac{1}{\eps}k \log(k/\eps) + k \log (n/k))$.
\end{corollary}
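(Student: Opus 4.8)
The plan is to obtain Corollary~\ref{cor:tworound} by simply instantiating the black-box transformation of Theorem~\ref{thm:tworound} with concrete non-adaptive $\ell_2/\ell_2$ schemes and adding the two resulting terms. For the base scheme I would take the non-adaptive construction of~\cite{GLPS} referenced in Equation~\eqref{e:m2}: for approximation factor $1+\eps$ and constant failure probability it uses $m(k,\eps,n,\delta)=O(\frac{1}{\eps}k\log(n/k))$ measurements, and --- the property that makes Theorem~\ref{thm:tworound} applicable --- its output is supported on a set of size $s(k,\eps,n,\delta)=O(k)$. For the $1$-sparse recovery invoked inside the second round I would use a standard scheme such as Count-Sketch~\cite{CCF} with $O(1)$ buckets and $O(\log(n/\delta))$ rows, which gives $m(1,1,n,\delta)=O(\log(n/\delta))$.

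Next I would read off the two rounds directly from Theorem~\ref{thm:tworound}. The first round costs $m(k,\eps/5,\poly(k/\eps),1/100)$; since $\poly(k/\eps)$ is a fixed power of $k/\eps$, its logarithm is $O(\log(k/\eps))$, so this round uses $O(\frac{1}{\eps}k\log(k/\eps))$ measurements. The second round costs $O(k\cdot m(1,1,n,\Theta(1/k)))$; here $k=1$, $\eps=1$, and $\delta=\Theta(1/k)\ge 1/n$, so each $1$-sparse instance uses $O(\log(nk))$ measurements and the round costs $O(k\log(nk))=O(k\log(n/k)+k\log k)$. Adding the two rounds and absorbing $k\log k\le\frac{1}{\eps}k\log(k/\eps)$ (valid for $\eps\le 1$) yields the claimed total $O(\frac{1}{\eps}k\log(k/\eps)+k\log(n/k))$, and the constant success probability is inherited verbatim from Theorem~\ref{thm:tworound}.

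I do not expect a genuine obstacle --- the statement really is a corollary --- but the two things I would be careful to verify are structural features of the base scheme rather than the arithmetic: first, that~\cite{GLPS} indeed outputs an $O(k)$-sparse vector, so that the hypothesis $s(k,\eps,n,\delta)=O(k)$ of Theorem~\ref{thm:tworound} is met; and second, that the $1$-sparse subroutine attains failure probability $\Theta(1/k)$ with only an \emph{additive} $O(\log k)$ overhead in the measurement count (as Count-Sketch does, by taking a union bound over the $n$ coordinate estimates across $O(\log(n/\delta))$ rows) rather than a \emph{multiplicative} $O(\log k)$ overhead, which would inflate the second round to $O(k\log n\log k)$ and break the bound. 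Everything else is bookkeeping.
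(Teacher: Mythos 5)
Your proposal is correct and takes essentially the same approach as the paper: instantiate Theorem~\ref{thm:tworound} with a $(1+\eps)$-approximate scheme having $O(k)$-sparse output in the first round and an $O(\log n)$-measurement $1$-sparse scheme with failure probability $\Theta(1/k)$ in the second, then absorb the $k\log k$ term into $\frac{1}{\eps}k\log(k/\eps)$. The only cosmetic difference is that the paper uses CountSketch with the $s=2k$ guarantee of~\cite{PW11} for the first round rather than~\cite{GLPS}; both satisfy the hypotheses, and your two flagged verification points (the $O(k)$-sparsity of the output and the additive rather than multiplicative $\log k$ overhead in the second round) are exactly the right things to check and do hold.
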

\begin{proof}[Proof of Corollary~\ref{cor:tworound}.]
In the first round it suffices to use CountSketch with 
$s(k, \eps, n, 1/100) = 2k$, which 
holds for
any $\eps > 0$ \cite{PW11}. We also have that
$m(k, \eps/5, \poly(k/\eps), 1/100) = O(\frac{1}{\eps} k \log(k/\eps))$. 
%
Using~\cite{CCF,CM06:Combinatorial-Algorithms,GI}, in the second round
we can set $m(1, 1, n, \Theta(1/k)) = O( \log n)$.  The bound follows
by observing that $\frac{1}{\eps}k \log(k/\eps) + k \log (n) =
O(\frac{1}{\eps}k \log(k/\eps) + k \log (n/k))$.
\end{proof}

\begin{proof}[Proof of Theorem~\ref{thm:tworound}.]
In the first round we perform a dimensionality reduction of the
$n$-dimensional input vector $x$ to a $\poly(k/\eps)$-dimensional
input vector $y$. We then apply the black box sparse recovery
algorithm on the reduced vector $y$, obtaining a list of $s(k, \eps/5,
\poly(k/\eps), 1/100)$ coordinates, and show for each coordinate in
the list, if we choose the largest preimage for it in $x$, then this
list of coordinates can be used to provide a $1+\eps$ approximation
for $x$.  In the second round we then identify which heavy coordinates
in $x$ map to those found in the first round, for which it suffices to
invoke the black box algorithm with only a constant approximation. We
place the estimated values of the heavy coordinates obtained in the
first pass in the locations of the heavy coordinates obtained in the
second pass.

Let $N = \poly(k/\eps)$ be determined below. Let $h:[n] \rightarrow
[N]$ and $\sigma: [n] \rightarrow \{-1,1\}$ be $\Theta(\log N)$-wise
independent random functions.  Define the vector $y$ by $y_i = \sum_{j
  \ \mid \ h(j) = i} \sigma(j) x_j$.  Let $Y(i)$ be the vector $x$
restricted to coordinates $j \in [n]$ for which $h(j) = i$.  Because
the algorithm is invariant under permutation of coordinates of $y$, we
may assume for simplicity of notation that $y$ is sorted: $\abs{y_1}
\geq \dotsc \geq \abs{y_N} = 0$.

We note that such a dimensionality reduction is often used in the
streaming literature. For example, the sketch of~\cite{TZ} for
$\ell_2$-norm estimation utilizes such a mapping.  A ``multishot''
version (that uses several functions $h$) has been used before in the
context of sparse recovery~\cite{CCF,CM06:Combinatorial-Algorithms}
(see~\cite{GI} for an overview).  Here, however, we need to analyze a
``single-shot'' version.

Let $p \in \{1,2\}$, and consider sparse recovery with the
$\ell_p/\ell_p$ guarantee.  We can assume that $\|x\|_p = 1$. We need
two facts concerning concentration of measure.

\begin{fact}\label{fact:chernoff}(see, e.g., Lemma 2 of \cite{knpw})
Let $X_1,\ldots,X_n$ be such that $X_i$ has expectation $\mu_i$ and
variance $v_i^2$, and $X_i \le K$ almost surely.  Then if the
$X_i$ are $\ell$-wise independent for an even integer $\ell\ge 2$,
\[
\Pr\left[\left|\sum_{i=1}^n X_i - \mu\right| \ge \lambda\right] \le
2^{O(\ell)} \left(\left(v \sqrt{\ell}/\lambda\right)^\ell +
  \left(K\ell/\lambda\right)^\ell\right)
\]
where $\mu = \sum_i \mu_i$ and $v^2 = \sum_i v_i^2$.
\end{fact}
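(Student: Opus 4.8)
The plan is to prove this by the moment method: bound the $\ell$-th central moment of $\sum_i X_i$ and convert to a tail bound via Markov's inequality. First replace $X_i$ by the centered variable $Y_i = X_i - \mu_i$, which is mean-zero, $\ell$-wise independent, has $\E[Y_i^2] = v_i^2$, and (since the $X_i$ are bounded---in the applications they are symmetric, so $\abs{Y_i}\le K' := O(K)$ pointwise; I return to the purely one-sided hypothesis below) satisfies $\abs{Y_i}\le K'$. Since $\ell$ is even, $(\sum_i Y_i)^\ell\ge 0$, so $\Pr[\abs{\sum_i X_i - \mu}\ge\lambda] = \Pr[(\sum_i Y_i)^\ell \ge \lambda^\ell] \le \E[(\sum_i Y_i)^\ell]/\lambda^\ell$. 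It therefore suffices to show $\E[(\sum_i Y_i)^\ell] \le 2^{O(\ell)}\bigl(\ell^{\ell/2}v^\ell + \ell^\ell K^\ell\bigr)$.

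To bound the moment, I would expand $\E[(\sum_i Y_i)^\ell] = \sum_{f:[\ell]\to[n]}\E[\prod_{t\in[\ell]}Y_{f(t)}]$ and group the $n^\ell$ terms according to the set partition $\pi$ of the position set $[\ell]$ induced by $f$ (positions $t,t'$ lie in the same block of $\pi$ iff $f(t)=f(t')$). By $\ell$-wise independence the expectation of a product of at most $\ell$ of the $Y_i$ factors over distinct indices, and each $Y_i$ is mean-zero, so any $f$ whose induced partition contains a singleton block contributes $0$; hence only partitions into blocks of size $\ge 2$ survive, and such a $\pi$ has $j\le\ell/2$ blocks. Summing over the index values assigned to the blocks and passing to absolute values, the total contribution of the partitions with $j$ blocks is at most $N_j\prod_{s=1}^{j}\bigl(\sum_{i}\E[\abs{Y_i}^{a_s}]\bigr)$, where $a_1,\dots,a_j\ge 2$ are the block sizes and $N_j$ counts set partitions of $[\ell]$ into $j$ blocks of size $\ge 2$; using $\abs{Y_i}^{a}\le K'^{a-2}Y_i^2$ gives $\sum_i\E[\abs{Y_i}^{a}]\le K'^{a-2}v^2$, so altogether $\E[(\sum_i Y_i)^\ell]\le\sum_{j=1}^{\ell/2}N_j\,K'^{\ell-2j}v^{2j}$.

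It remains to estimate this sum, and the two terms of the claim come from its two endpoints. For $j=\ell/2$ all blocks have size exactly $2$, so $N_{\ell/2}=(\ell-1)!! = 2^{O(\ell)}\ell^{\ell/2}$ and the term is $2^{O(\ell)}\ell^{\ell/2}v^\ell$; for $j=1$ we have $N_1=1$ and the term $K'^{\ell-2}v^2$, which is $\le 2^{O(\ell)}(\ell^{\ell/2}v^\ell + \ell^\ell K^\ell)$ after casing on whether $v\le K'$ (then $K'^{\ell-2}v^2\le K'^\ell$) or $v> K'$ (then $K'^{\ell-2}v^2< v^\ell$). I would then check that the intermediate terms are dominated by the endpoints: with a clean upper bound on $N_j$, the sequence $N_j K'^{\ell-2j}v^{2j}$ is essentially log-concave in $j$, so the whole sum is within a factor $\ell\le 2^\ell$ of the sum of its first and last terms, and everything is absorbed into the $2^{O(\ell)}$. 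Dividing by $\lambda^\ell$ finishes the proof.

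The main obstacle is exactly this last combinatorial bookkeeping: obtaining a bound on $N_j$ (the number of partitions of $[\ell]$ into blocks of size $\ge 2$) clean enough to see that $\sum_j N_j K'^{\ell-2j}v^{2j}$ is governed by its two extreme terms while leaking no more than a $2^{O(\ell)}$ factor; the crude estimate $N_j\le j^\ell$ is too wasteful for small $j$ and would spuriously inflate the variance term by an extra $\ell^{\ell/2}$. A softer, secondary point is the reduction from the one-sided hypothesis ``$X_i\le K$ almost surely'' to a two-sided pointwise bound $\abs{Y_i}=O(K)$ on the centered variables; this is immediate when the $X_i$ are symmetric about their means (as for the $\sigma(\cdot)$-signed sums arising in our applications), and in general can be handled by a routine truncation/symmetrization argument.
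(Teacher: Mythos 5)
First, a framing point: the paper offers no proof of this Fact at all---it is quoted from Lemma~2 of the cited reference and used as a black box---so there is no in-paper argument to compare yours against; your moment-method derivation is the standard route for bounds of this type. On the step you flag as the main obstacle, your plan does close, and more easily than you fear: bound the number $N_j$ of admissible partitions of $[\ell]$ into $j$ blocks by the Stirling-number estimate $N_j\le j^\ell/j!\le e^{j}j^{\ell-j}$, write $A=\sqrt{\ell}\,v$ and $B=\ell K'$, and observe that the $j$-th term satisfies $N_jK'^{\ell-2j}v^{2j}\le e^{\ell/2}(j/\ell)^{\ell-j}B^{\ell-2j}A^{2j}\le e^{\ell/2}\max(A,B)^{\ell}\le e^{\ell/2}\left(A^\ell+B^\ell\right)$, since $j\le\ell$ kills the first factor and the middle expression is a geometric interpolation between $B^\ell$ and $A^\ell$. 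Summing over the at most $\ell/2$ values of $j$ costs another factor absorbed into $2^{O(\ell)}$; no log-concavity argument is needed, as every term is individually dominated by the two claimed endpoint terms.

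The genuine gap is the point you dismiss as ``softer'': the reduction from the one-sided hypothesis $X_i\le K$ to a two-sided bound $\abs{Y_i}=O(K)$ is not a routine truncation/symmetrization, because the Fact as literally stated is false under the one-sided hypothesis alone. Take $\ell=4$, let $X_1=-M$ with probability $p$ and $0$ otherwise, and let $X_2=\dotsb=X_n=0$ deterministically, so the family is mutually independent. One may take $K=0$, and $v^2=M^2p(1-p)$; with $\lambda=M/2$ the left-hand side is at least $p$ while the right-hand side is $2^{O(1)}\bigl(4\sqrt{p(1-p)}\bigr)^{4}=O(p^2)$, a contradiction for small $p$. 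Symmetrization cannot rescue this (the symmetrized variables are no longer bounded by $O(K)$), and truncating from below discards exactly the event that violates the bound; the lower tail is genuinely only Chebyshev-controlled here. The correct reading---and the one under which your proof is complete---is that $\abs{X_i-\mu_i}\le O(K)$, e.g.\ because $0\le X_i\le K$. This holds in both places the paper invokes the Fact (the variables $X_i=\abs{x_i}^p\mathbf{1}_{h(i)=j}$ in event $\mathcal{G}$, and indicator variables in the fifth condition), so the applications are sound, but your writeup should assume the two-sided bound outright rather than promise a reduction that does not exist.
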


\begin{fact}\label{fact:khintchine} (Khintchine inequality) (\cite{h82})
For $t \geq 2$, a vector $z$ and a $t$-wise independent random sign
vector $\sigma$ of the same number of dimensions, ${\bf E}[|\langle z,
  \sigma \rangle|^t] \leq \|z\|_2^t (\sqrt{t})^t.$
\end{fact}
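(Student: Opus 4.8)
The plan is to prove Fact~\ref{fact:khintchine} by the method of moments, exploiting the observation that expanding $\langle z,\sigma\rangle^t$ to degree $t$ only ever multiplies together $t$ of the sign variables at a time; consequently $t$-wise independence behaves \emph{exactly} like full independence for this particular computation, which is precisely why limited independence suffices. Throughout write $S = \langle z,\sigma\rangle = \sum_i z_i\sigma_i$.

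First I would reduce to the case that $t$ is an even integer. For general real $t \ge 2$, let $\ell$ be the least even integer with $\ell \ge t$; by monotonicity of $L^p$ norms, $(\E[|S|^t])^{1/t} \le (\E[|S|^\ell])^{1/\ell}$, and since $\ell \le t+2$ the target $(\sqrt{t})^t$ comfortably absorbs the resulting $O(1)$ factor (this step reads the hypothesis ``$t$-wise independent'' as supplying the $\ell$-wise independence needed when $t$ is not itself an even integer). So the heart of the matter is even $t = 2m$, where $\E[|S|^t] = \E[S^t]$ and the whole quantity can be expanded as a polynomial in the $z_i$. Writing $\E[S^t] = \sum_{i_1,\dots,i_t} z_{i_1}\cdots z_{i_t}\,\E[\sigma_{i_1}\cdots\sigma_{i_t}]$, each monomial involves at most $t$ distinct indices, so $t$-wise independence lets the expectation factor across distinct indices; since $\E[\sigma^c]$ equals $1$ for even $c$ and $0$ for odd $c$, the expectation is $1$ exactly when every index value occurs an even number of times and $0$ otherwise. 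Collecting terms by multiplicity vector $\mathbf{c} = (c_1,\dots,c_n)$ with even entries summing to $t$ gives
\[
\E[S^t] = \sum_{\mathbf{c}} \binom{t}{c_1,\dots,c_n}\prod_j z_j^{c_j}.
\]

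The key structural point is that, because every $c_j$ is even, each factor $z_j^{c_j}$ is nonnegative; this lets me compare the expansion above \emph{term by term} against the analogous expansion of the target, namely $\norm{2}{z}^{t} = \bigl(\sum_j z_j^2\bigr)^m = \sum_{\mathbf{c}} \binom{m}{c_1/2,\dots,c_n/2}\prod_j z_j^{c_j}$, which ranges over the same set of multiplicity vectors. Term-by-term domination therefore reduces the whole inequality to bounding the ratio of multinomial coefficients $\binom{t}{\mathbf{c}}/\binom{m}{\mathbf{c}/2}$ uniformly over admissible $\mathbf{c}$, after which $\E[S^t] \le \bigl(\max_{\mathbf{c}} \binom{t}{\mathbf{c}}/\binom{m}{\mathbf{c}/2}\bigr)\norm{2}{z}^t$.

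Finally I would analyze that ratio, which is where the only nonroutine work lies. Factoring it as $\frac{(2m)!}{m!}\prod_j \frac{(c_j/2)!}{c_j!}$, I would show the product $\prod_j g(c_j)$, with $g(c) = (c/2)!/c!$ and $g(0)=1$, is maximized at the ``all twos'' configuration (every nonzero entry equal to $2$). This follows from submultiplicativity of $g$, i.e. $g(a+b) \le g(a)g(b)$ for even $a,b \ge 2$, which is equivalent to the transparent binomial inequality $\binom{(a+b)/2}{a/2} \le \binom{a+b}{a}$; hence splitting any even block into smaller even blocks only increases the product, and the minimal block size $2$ gives the extremum. The maximizing configuration yields ratio $\frac{(2m)!}{2^m m!} = (2m-1)!! = (t-1)!!$, so $\E[S^t] \le (t-1)!!\,\norm{2}{z}^t$. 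The bound $(t-1)!! \le (\sqrt{t})^t$ is then immediate, since $(t-1)!! = (2m-1)!!$ is a product of $m$ factors each at most $2m = t$, giving $(2m-1)!! \le t^m = t^{t/2} = (\sqrt{t})^t$. I expect the submultiplicativity/extremal-configuration step to be the main obstacle; everything else is the routine bookkeeping of expanding moments and matching multinomial coefficients.
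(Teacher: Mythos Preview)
The paper does not prove this statement; it is quoted as a known fact with a citation to Haagerup's paper on the best Khintchine constants, and it is used (with $t=\Theta(\log N)$, chosen even) inside the proof of Lemma~\ref{lem:3cond}. So there is no in-paper argument to compare against.

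Your argument for even $t=2m$ is correct and is the standard moment-method derivation. The key point---that expanding $S^t$ only multiplies at most $t$ sign variables at a time, so $t$-wise independence is indistinguishable from full independence for this computation---is exactly right, and your term-by-term comparison of $\E[S^{2m}]$ with the multinomial expansion of $\norm{2}{z}^{2m}$, together with the submultiplicativity of $g(c)=(c/2)!/c!$, cleanly gives $\E[S^t]\le (t-1)!!\,\norm{2}{z}^t\le (\sqrt t)^t\norm{2}{z}^t$. A slightly slicker route to the same constant, which you may prefer: compare Rademacher to Gaussian moments. For even $c$ one has $\E[g^c]=(c-1)!!\ge 1=\E[\sigma^c]$, so term by term $\E_\sigma[S^{2m}]\le \E_g[(\sum_i z_i g_i)^{2m}]=\norm{2}{z}^{2m}(2m-1)!!$, the last equality because $\sum_i z_i g_i$ is exactly $N(0,\norm{2}{z}^2)$. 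This bypasses the extremal-configuration analysis entirely.

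Your reduction from general $t$ to even $t$ has a genuine gap, which you partially flag. If $t$ is an odd integer, the next even integer is $\ell=t+1>t$, and $t$-wise independence does \emph{not} control $\E[S^\ell]$; you cannot simply ``read'' the hypothesis as supplying $\ell$-wise independence. Moreover, even granting the $\ell$-th moment bound, you would obtain $\E[|S|^t]\le (\sqrt{\ell})^t\norm{2}{z}^t$, and $(\sqrt{\ell}/\sqrt{t})^t=(1+1/t)^{t/2}$ is bounded but strictly greater than $1$, so you only recover the stated inequality up to a constant. For the paper's purposes this is moot: ``$t$-wise independent'' already presupposes integer $t$, and in the single application (Event~$\mathcal F$) one takes $t=\Theta(\log N)$ even. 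So the even-$t$ case is all that is actually needed, and your proof of that case is complete.
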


We start with a probabilistic lemma. Let $Z(j)$ denote the vector
$Y(j)$ with the coordinate $m(j)$ of largest magnitude removed.
\begin{lemma}\label{lem:3cond}
Let $r = O \left (\|x_{\overline{[k]}}\|_p \cdot \frac{\log N}{N^{1/6}} \right)$ and $N$ be
sufficiently large. Then with probability $\geq 99/100$,
\begin{enumerate}
\item $\forall j \in [N]$, $\|Z(j)\|_p \leq r$.
\item $\forall i \in [N^{1/3}]$,
$|\sigma(i) \cdot y_{h(i)}- x_i| \leq r$,
\item $\|y_{\overline{[k]}}\|_p \leq (1 + O(1/\sqrt{N})) \cdot \|x_{\overline{[k]}}\|_p + O(kr)$,
\item $\forall j \in [N]$, if $h^{-1}(j) \cap [N^{1/3}] = \emptyset$,
then $|y_j| \leq r$,
\item $\forall j \in [N]$, $\|Y(j)\|_0 = O(n/N + \log N)$.
\end{enumerate}
\end{lemma}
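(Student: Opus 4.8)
The plan is to prove each of the five statements separately, each with failure probability $O(1/100)$ (in fact $o(1/100)$ for most), and then take a union bound. Throughout we use that $h$ and $\sigma$ are $\Theta(\log N)$-wise independent, which lets us apply Facts~\ref{fact:chernoff} and~\ref{fact:khintchine} with $\ell = \Theta(\log N)$; the key feature of these tail bounds is that with $\ell$-wise independence the failure probability can be driven to $N^{-c}$ for any constant $c$, which after a union bound over the at most $N$ ``bad events'' of each type still leaves us with $o(1)$ total. Since $\|x\|_p = 1$ we also have $\|x_{\overline{[k]}}\|_p \le 1$, so $r = O(\log N / N^{1/6})$.

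For statement~(1): fix $j \in [N]$ and consider $Z(j)$, the bucket contents minus the single largest-magnitude coordinate. Each coordinate $i \in [n] \setminus H_k(x)$ of magnitude at most $\|x_{\overline{[k]}}\|_\infty$ lands in bucket $j$ with probability $1/N$; the contribution of the top-$k$ coordinates to $Z(j)$ is zero \emph{unless} two of them collide in bucket $j$, and with $\Theta(\log N)$-wise independence the probability that bucket $j$ contains two or more of the top $k$ coordinates is at most $\binom{k}{2}/N^2 = O(k^2/N^2)$, negligible after a union bound over $j$ provided $N \ge k^{4}$ say. So conditioned on no heavy collision, $\|Z(j)\|_p^p$ is a sum of $p$-th powers of a random subset of the tail coordinates, with expectation $\|x_{\overline{[k]}}\|_p^p / N$; apply Fact~\ref{fact:chernoff} (to the variables $|x_i|^p \mathbf{1}[h(i)=j]$, which are bounded by $\|x_{\overline{[k]}}\|_\infty^p \le (\|x_{\overline{[k]}}\|_p/1)^p$ since each individual tail coordinate has magnitude at most the $p$-norm of the tail — actually we only need it bounded by $1$) to conclude $\|Z(j)\|_p^p = O((\log N)^{?}/N \cdot \mathrm{poly})$, and taking $p$-th roots and absorbing constants gives $\|Z(j)\|_p \le r$ with the stated $r$; the $N^{1/6}$ (rather than $N^{1/2}$) in $r$ is the slack that makes the $K$-term in Fact~\ref{fact:chernoff} harmless. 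Statement~(5) is the same collision/concentration computation applied to $\|Y(j)\|_0$: its expectation is $\le 1 + (n-1)/N$, and the $\ell$-wise-independent tail bound gives $O(n/N + \log N)$ with probability $1 - N^{-c}$ per bucket.

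For statements~(2) and~(4): here we want control on $y_{h(i)}$ for the \emph{heaviest} coordinates $i \in [N^{1/3}]$, i.e.\ on the ``noise'' $y_{h(i)} - \sigma(i)x_i = \sigma(i)\sum_{j \ne i, h(j)=h(i)} \sigma(j)x_j$ added to coordinate $i$ by its bucket. Conditioned on the hash values $h$, this is a Rademacher sum $\langle z, \sigma\rangle$ over the vector $z$ of other coordinates in the bucket, so Fact~\ref{fact:khintchine} with $t = \Theta(\log N)$ gives $\mathbf{E}|\langle z,\sigma\rangle|^t \le \|z\|_2^t t^{t/2}$, hence a bound of $O(\|z\|_2 \sqrt{\log N})$ with probability $1 - N^{-c}$. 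It remains to bound $\|z\|_2 = \|Y(h(i))\|_2$ minus the $x_i$ term, which is $\le \|Z(h(i))\|_2 + (\text{possible other heavy coordinate in the bucket})$; a heavy coordinate $i' \le k$ with $i' \ne i$ collides with $i$ with probability $\le k/N$, negligible after union over $i \in [N^{1/3}]$, so generically $\|z\|_2 \le \|Z(h(i))\|_2 \le r$ by statement~(1) (note for $p=1$ we need $\|Z(j)\|_2 \le \|Z(j)\|_1 \le r$; for $p=2$ it is immediate). Multiplying the two bounds and re-absorbing constants keeps us within $O(r)$, proving~(2); statement~(4) is the special case where $h^{-1}(j) \cap [N^{1/3}] = \emptyset$ so the whole bucket is ``light'', $y_j = \langle Y(j), \sigma\rangle$ and $\|Y(j)\|_2 \le \|Z(j)\|_2 + \|x_{\overline{[k]}}\|_\infty \le O(r)$ (the removed coordinate $m(j)$ is itself a tail coordinate since no top-$N^{1/3} \supseteq$ top-$k$ coordinate is present).

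For statement~(3), the sorted-order bound on $\|y_{\overline{[k]}}\|_p$: the intuition is that bucketing does not inflate the tail mass. Write $\|y\|_p^p = \sum_j |y_j|^p$; each $|y_j|^p = |\langle Y(j),\sigma\rangle|^p$ has expectation (over $\sigma$) comparable to $\|Y(j)\|_2^p$, and $\sum_j \|Y(j)\|_2^2 = \|x\|_2^2$, so $\sum_j \mathbf{E}|y_j|^p \le \|x\|_p^p (1 + o(1))$-type bounds follow for $p=2$ directly and for $p=1$ via Cauchy--Schwarz together with the bucket-size bound~(5); the subtlety is that we must also argue the $k$ largest $|y_j|$ are ``paid for'' by the $k$ largest $|x_i|$, which is where the $O(kr)$ additive slack enters — each of the top $k$ buckets loses at most its largest coordinate and a noise term of size $O(r)$, so $\|y_{\overline{[k]}}\|_p \le \|(\text{tail of } y)\|_p \le (1 + O(1/\sqrt N))\|x_{\overline{[k]}}\|_p + O(kr)$ after accounting for the $k$ noise terms. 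The main obstacle, and the step I would spend the most care on, is precisely statement~(3): getting the multiplicative factor down to $1 + O(1/\sqrt N)$ rather than a constant requires carefully separating the contribution of genuine tail coordinates (which concentrate tightly) from the at most $k$ heavy coordinates and their induced noise (which only contribute the additive $O(kr)$), and for $p = 1$ the passage from second-moment/Rademacher estimates to an $\ell_1$ bound needs the bucket-occupancy bound~(5) to control $\|Y(j)\|_1 \le \sqrt{\|Y(j)\|_0}\,\|Y(j)\|_2$ without losing more than a lower-order factor.
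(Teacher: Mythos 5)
Your overall architecture (per-condition tail bounds exploiting $\Theta(\log N)$-wise independence, Khintchine conditioned on $h$ for conditions 2 and 4, a lost-coordinate/substitute accounting for condition 3, and a final union bound) matches the paper's. But there is a genuine gap in your proof of condition 1, and it propagates to conditions 2 and 4, which you derive from it. You apply Fact~\ref{fact:chernoff} to the variables $|x_i|^p\mathbf{1}[h(i)=j]$ for \emph{all} $i$ outside the top $k$, so the almost-sure bound in that fact is $K=|x_{k+1}|^p$. Nothing prevents $|x_{k+1}|^p$ from being a constant fraction of $\|x_{\overline{[k]}}\|_p^p$ (the tail mass can sit on a handful of coordinates just below the top $k$), in which case the second term of the tail bound forces $\lambda \geq \Omega(K\ell) = \Omega(\|x_{\overline{[k]}}\|_p^p \log N)$, which is far larger than $r^p = \|x_{\overline{[k]}}\|_p^p(\log N)^p/N^{p/6}$, and the deviation bound is vacuous. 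The ``$N^{1/6}$ slack'' you invoke cannot absorb this; it is calibrated for a much smaller $K$. (In the extreme cases the conclusion happens to remain true because the offending coordinate is removed as $m(j)$, but your concentration argument is applied to the bucket sum \emph{before} the maximum is deleted and cannot see that.)

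The paper's fix is to treat the top $N^{1/3}$ coordinates, not merely the top $k$, as ``head'': event $\mathcal{E}$ says $h(1),\dotsc,h(N^{1/3})$ are distinct, so each bucket contains at most one coordinate of $[N^{1/3}]$, and since $x$ is sorted that coordinate is $m(j)$ and is excluded from $Z(j)$. The moment bound (event $\mathcal{G}$) is then applied only to $i>N^{1/3}$, where $K=|x_{N^{1/3}+1}|^p \leq \|x_{\overline{[k]}}\|_p^p/(N^{1/3}-k+1)$ by a simple counting argument; now $K\ell = O(\|x_{\overline{[k]}}\|_p^p \log N / N^{1/3}) \leq r^p$, and this is precisely where the exponent $1/6$ in $r$ comes from after taking $p$-th roots with $p=2$. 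You need such an intermediate cutoff somewhere; bounding collisions only among the top $k$ does not address the coordinates in $(k,N^{1/3}]$. The rest of your sketch is essentially the paper's argument: for condition 3 with $p=2$ the $(1+O(1/\sqrt{N}))$ factor is the Thorup--Zhang second-moment analysis, which the paper cites rather than reproves, and for $p=1$ the bound $\|y'\|_1 \leq \|x_{\overline{[k]}}\|_1$ is deterministic, so no Cauchy--Schwarz or bucket-occupancy step is needed there.
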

\begin{proof}
We start by defining events $\mathcal{E}$, $\mathcal{F}$ and
$\mathcal{G}$ that will be helpful in the analysis, and showing that
all of them are satisfied simultaneously with constant probability.
\\\\
{\bf Event $\mathcal{E}$:}
Let $\mathcal{E}$ be the event that $h(1), h(2), \ldots, h(N^{1/3})$ are distinct.
Then $\Pr_{h}[\mathcal{E}] \geq 1-1/N^{1/3}$.
\\\\
{\bf Event $\mathcal{F}$:}
Fix $i \in [N]$. Let $Z'$ denote the vector $Y(h(i))$ with the coordinate $i$ removed.
Applying Fact \ref{fact:khintchine} with $t = \Theta(\log N)$,
\begin{eqnarray*}
\Pr_{\sigma}[|\sigma(i) y_{h(i)} - x_i| \geq 2 \sqrt{t} \cdot \|Z(h(i))\|_2]
\ifconf\\ \fi
& \leq & \Pr_{\sigma}[|\sigma(i) y_{h(i)} - x_i| \geq 2 \sqrt{t} \cdot \norm{2}{Z'}] \\
& \leq & \Pr_{\sigma}[|\sigma(i)y_{h(i)} - x_i|^t \geq 2^t (\sqrt{t})^t \cdot \|Z'\|_2^t]\\
& \leq & \Pr_{\sigma}[|\sigma(i)y_{h(i)} - x_i|^t \geq 2^t \E[\abs{\langle \sigma, Z'\rangle}^t]]\\
& = & \Pr_{\sigma}[|\sigma(i)y_{h(i)}-x_i|^t \geq 2^t \E[|\sigma(i)y_{h(i)}-x_i|^t] \leq 1/N^{4/3}.
\end{eqnarray*}
Let $\mathcal{F}$ be the event that for all $i \in [N]$,
$|\sigma(i)y_{h(i)}-x_i| \leq 2 \sqrt{t} \cdot \|Z(h(i))\|_2,$
so $\Pr_{\sigma}[\mathcal{F}] \geq 1-1/N$.
\\\\
{\bf Event $\mathcal{G}$:} 
Fix $j \in [N]$ and for each $i \in \{N^{1/3}+1, \ldots, n\}$, let
$X_i = |x_i|^p {\bf 1}_{h(i) = j}$ (i.e., $X_i=|x_i|^p$ if $h(i)=j$).
We apply Lemma \ref{fact:chernoff} to the $X_i$.  In the notation of
that lemma, $\mu_i = |x_i|^p/N$ and $v_i^2 \leq |x_i|^{2p}/N$, and so
$\mu = \|x_{\overline{[N^{1/3}]}}\|_p^p/N$ and $v^2 \leq
\|x_{\overline{[N^{1/3}]}}\|_{2p}^{2p}/N$. Also, $K =
|x_{N^{1/3}+1}|^p$.  Function $h$ is $\Theta(\log N)$-wise
independent, so by Fact \ref{fact:chernoff},
\begin{align*}
  \Pr \left [\left| \sum_i X_i -
      \frac{\|x_{\overline{[N^{1/3}]}}\|_p^p}{N} \right| \geq \lambda
  \right ]
  \leq& 2^{O(\ell)} \left ( \left
      (\|x_{\overline{[N^{1/3}]}}\|_{2p}^p  \sqrt{\ell}/(\lambda
       \sqrt{N}) \right )^{\ell} + \left (|x_{N^{1/3}+1}|^p 
      \ell / \lambda \right )^{\ell} \right )
\end{align*}
for any $\lambda > 0$ and an $\ell = \Theta(\log N)$. For $\ell$ large enough, there is a 
\[
\lambda = \Theta(\|x_{\overline{[N^{1/3}]}}\|_{2p}^p \sqrt{(\log
  N)/N} + |x_{N^{1/3}+1}|^p \cdot \log N)
\]
for which this probability is $\leq N^{-2}$. Let $\mathcal{G}$ be the
event that for all $j \in [N]$, $\|Z(j)\|_p^p \le C \left
(\frac{\|x_{\overline{[N^{1/3}]}}\|_p^p}{N} + \lambda \right )$ for
some universal constant $C>0$.  Then $\Pr[\mathcal{G} \mid
  \mathcal{E}] \geq 1-1/N$.
\\\\
By a union bound, $\Pr[\mathcal{E} \wedge \mathcal{F} \wedge
  \mathcal{G}] \geq 999/1000$ for $N$ sufficiently large.

We know proceed to proving the five conditions in the lemma statement.
In the analysis we assume that the event $\mathcal{E} \wedge
\mathcal{F} \wedge \mathcal{G}$ holds (i.e., we condition on that
event).
\\\\
{\bf First Condition:} This condition follows from the occurrence of
$\mathcal{G}$, and using that $\|x_{\overline{[N^{1/3}]}}\|_{2p} \leq
\|x_{\overline{[N^{1/3}]}}\|_p$, and $\|x_{\overline{[N^{1/3}]}}\|_p
\leq \|x_{\overline{[k]}}\|_p$, as well as
$(N^{1/3}-k+1)|x_{N^{1/3}+1}|^p \leq \|x_{\overline{[k]}}\|_p^p$. One
just needs to make these substitutions into the variable $\lambda$
defining $\mathcal{G}$ and show the value $r$ serves as an upper bound
(in fact, there is a lot of room to spare, e.g., $r/\log N$ is also an
upper bound).
\\\\
{\bf Second Condition:} This condition follows from the joint
occurrence of $\mathcal{E}$, $\mathcal{F}$, and $\mathcal{G}$.
\\\\
{\bf Third Condition: } For the third condition, let $y'$ denote the
restriction of $y$ to coordinates in the set $[N] \setminus \{h(1),
h(2), ..., h(k)\}$.  For $p = 1$ and for any choice of $h$ and
$\sigma$, $\|y'\|_1 \leq \|x_{\overline{[k]}}\|_1$.  For $p = 2$, the
vector $y$ is the sketch of~\cite{TZ} for $\ell_2$-estimation. By
their analysis, with probability $\geq 999/1000$, $\|y'\|_2^2 \leq
(1+O(1/\sqrt{N}))\|x'\|_2^2$, where $x'$ is the vector whose support
is $[n] \setminus \cup_{i=1}^k h^{-1}(i) \subseteq [n] \setminus [k]$.
We assume this occurs and add $1/1000$ to our error probability.
Hence, $\|y'\|_2^2 \leq (1+O(1/\sqrt{N}))\|x_{\overline{[k]}}\|_2^2$.

We relate $\|y'\|_p^p$ to $\|y_{\overline{[k]}}\|_p^p$.  Consider any
$j = h(i)$ for an $i \in [k]$ for which $j$ is not among the top $k$
coordinates of $y$. Call such a $j$ {\it lost}.  By the first
condition of the lemma, $|\sigma(i) y_j- x_i| \leq r$.  Since $j$ is
not among the top $k$ coordinates of $y$, there is a coordinate $j'$
among the top $k$ coordinates of $y$ for which $j' \notin h([k])$ and
$|y_{j'}| \geq |y_j| \geq |x_i| - r.$ We call such a $j'$ a {\it
  substitute}.  We can bijectively map substitutes to lost
coordinates. It follows that
$\|y_{\overline{[k]}}\|_p^p \leq \|y'\|_p^p
+ O(kr) \leq (1 + O(1/\sqrt{N})) \|x_{\overline{[k]}}\|_p^p + O(kr).$
\\\\
{\bf Fourth Condition:} This follows from the joint occurrence of
$\mathcal{E}, \mathcal{F}$, and $\mathcal{G}$, and using that
$|x_{m(j)}|^p \leq \|x_{\overline{[k]}}\|_p^p/(N^{1/3}-k+1)$ since
$m(j) \notin [N^{1/3}]$.
\\\\
{\bf Fifth Condition:} For the fifth condition, fix $j \in [N]$.  We
apply Fact \ref{fact:chernoff} where the $X_i$ are indicator variables
for the event $h(i) = j$. Then ${\bf E}[X_i] = 1/N$ and ${\bf
  Var}[X_i] < 1/N$.  In the notation of Fact \ref{fact:chernoff}, $\mu
= n/N$, $v^2 < n/N$, and $K = 1$. Setting $\ell = \Theta(\log N)$ and
$\lambda = \Theta(\log N + \sqrt{(n \log N)/N})$, we have by a union
bound that for all $j \in [N]$, $\|Y(j)\|_0 \leq \frac{n}{N} +
\Theta(\log N + \sqrt{(n \log N)/N}) = O(n/N + \log N)$, with
probability at least $1-1/N$.
\\\\
By a union bound, all events jointly occur with probability at least
$99/100$, which completes the proof.
\end{proof}

\noindent {\bf Event $\mathcal{H}$:} Let $\mathcal{H}$ be the event
that the algorithm returns a vector $\hat{y}$ with $\|\hat{y}-y\|_p
\leq (1+\eps/5)\|y_{\overline{[k]}}\|_p.$ Then $\Pr[\mathcal{H}] \geq
99/100$.  Let $S$ be the support of $\hat{y}$, so $|S|= s(k, \eps/5,
N, 1/100)$. We condition on $\mathcal{H}$.
\\\\
In the second round we run the algorithm on $Y(j)$ for each $j \in S$,
each using $m(1, 1, \|Y(j)\|_0, \Theta(1/k)))$- measurements. Using
the fifth condition of Lemma \ref{lem:3cond}, we have that $\|Y(j)\|_0
= O(\eps n/k + \log(k)/\eps)$ for $N = \poly(k/\eps)$ sufficiently
large.

For each invocation on a vector $Y(j)$ corresponding to a $j \in S$,
the algorithm takes the largest (in magnitude) coordinate HH$(j)$ in
the output vector, breaking ties arbitrarily. We output the vector
$\hat{x}$ with support equal to $T = \{\textrm{HH}(j) \mid j \in S
\}$. We assign the value $\sigma(x_j) \hat{y}_j$ to HH$(j)$. We have
\begin{align}\label{eqn:main} 
\|x-\hat{x}\|_p^p =& \|(x-\hat{x})_T\|_p^p + \|(x-\hat{x})_{[n] \setminus T}\|_p^p
\ifconf\notag\\\fi
=
\|(x-\hat{x})_T\|_p^p + \|x_{[n] \setminus T}\|_p^p.
\end{align}
The rest of the analysis is devoted to bounding the RHS of equation
\ref{eqn:main}.

\begin{lemma}
For $N = \poly(k/\eps)$ sufficiently large, 
conditioned on the events of Lemma \ref{lem:3cond} and $\mathcal{H}$,
$$\|x_{[n] \setminus T}\|_p^p \leq (1+
\eps/3)\|x_{\overline{[k]}}\|_p^p.$$
\end{lemma}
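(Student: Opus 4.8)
Here is the line of attack I would take.

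\medskip
\noindent\textbf{Reduction to a lower bound on $\|x_T\|_p^p$.} The plan is to write $\|x_{[n]\setminus T}\|_p^p=\|x\|_p^p-\|x_T\|_p^p$ (here $T=\{\textrm{HH}(j)\mid j\in S\}$, so $|T|\le|S|=s(k,\eps/5,N,1/100)=O(k)$) and to produce a matching lower bound on $\|x_T\|_p^p$ by arguing \emph{bucket by bucket}: I want to show that for every $j\in S$ the coordinate $\textrm{HH}(j)$ placed in $T$ carries essentially as much mass as the hashed value $y_j$, i.e.\ $|x_{\textrm{HH}(j)}|^p\ge(|y_j|-r')_+^p$ with $r'$ only a $\poly(\log N)$ factor larger than $r$. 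Here I would use that by Lemma~\ref{lem:3cond}(1) each $Z(j)$ has $\|Z(j)\|_p\le r$, that the companion event $\mathcal F$ (which specializes to Lemma~\ref{lem:3cond}(2) for buckets meeting $[N^{1/3}]$) gives $|\sigma(m(j))y_j-x_{m(j)}|\le 2\sqrt{\log N}\,\|Z(j)\|_2$, and that in the second round a constant-factor $1$-sparse recovery is run on each $Y(j)$ with $j\in S$. Since there are $O(k)$ such invocations, each failing with probability $\Theta(1/k)$, they all succeed simultaneously with constant probability; when they do, the recovered vector $\hat w$ satisfies $\|\hat w-Y(j)\|_p=O(r)$, hence $\textrm{HH}(j)=m(j)$ unless $|x_{m(j)}|=O(r)$, and in all cases $|x_{\textrm{HH}(j)}|\ge|x_{m(j)}|-O(r)\ge|y_j|-r'$. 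The main care in this step is buckets whose dominant coordinate has rank $>N^{1/3}$, where $\mathcal F$ and the second round give weaker control; one handles them by noting such a coordinate has magnitude $O(\|x_{\overline{[k]}}\|_2/N^{1/6})$ and contributes negligibly to both sides.

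\medskip
\noindent\textbf{Summation and the first-round guarantee.} Summing over $j\in S$ and using $(a-r')_+^p\ge a^p-O(r')\,a^{p-1}$ gives
\[
\|x_T\|_p^p\ \ge\ \sum_{j\in S}(|y_j|-r')_+^p\ \ge\ \|y_S\|_p^p-E_1,\qquad E_1=O\!\big(r'\,|S|^{1-1/p}\,\|y\|_p\big),
\]
so $E_1=O(r'\sqrt k\,\|y\|_2)$ for $p=2$. Next, since $(\hat y-y)$ coincides with $y$ on $[N]\setminus S$, event $\mathcal H$ yields $\|y_{\overline S}\|_p\le\|\hat y-y\|_p\le(1+\eps/5)\|y_{\overline{[k]}}\|_p$, whence $\|y_S\|_p^p=\|y\|_p^p-\|y_{\overline S}\|_p^p\ge\|y\|_p^p-(1+\eps/5)^p\|y_{\overline{[k]}}\|_p^p$. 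Combining,
\[
\|x_{[n]\setminus T}\|_p^p\ \le\ \big(\|x\|_p^p-\|y\|_p^p\big)+(1+\eps/5)^p\|y_{\overline{[k]}}\|_p^p+E_1+E_2,
\]
where $E_2=O(|S|\,r^p)$ absorbs the $\|Z(j)\|_p^p$ lost for $j\in S$. The point of charging \emph{every} $j\in S$ (rather than splitting $S$ into ``heavy'' and ``light'' buckets and bounding $\|x_{[k]\setminus T}\|_p^p$ against $\|x_{\overline{[k]}}\|_p^p$ separately) is precisely to avoid an otherwise-lossy factor of $2$: the tail mass captured by the light buckets of $S$ is then credited exactly once.

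\medskip
\noindent\textbf{Controlling the three overhead terms.} For $p=1$, $\|y\|_1\le\|x\|_1$, so $\|x\|_1-\|y\|_1\le 0$; for $p=2$, $\|x\|_2^2-\|y\|_2^2=-2\sum_{i<i':h(i)=h(i')}\sigma(i)\sigma(i')x_ix_{i'}$, and since by event $\mathcal E$ no two coordinates of $[k]$ collide, every surviving term involves a coordinate outside $[k]$; the $\Theta(\log N)$-wise independence of $h,\sigma$ then gives $|\,\|x\|_2^2-\|y\|_2^2\,|=O(\|x_{\overline{[k]}}\|_2/\sqrt N)$ with the required probability — this is exactly the \cite{TZ}-style norm preservation already used for Lemma~\ref{lem:3cond}(3). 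The term $\|y_{\overline{[k]}}\|_p^p\le(1+O(1/\sqrt N))\|x_{\overline{[k]}}\|_p^p+(\text{lower order})$ is Lemma~\ref{lem:3cond}(3) raised to the $p$-th power. Finally $E_1,E_2=O\!\big(\poly(k)\,(\|x_{\overline{[k]}}\|_p\,\poly(\log N)/N^{1/6})^{\Omega(1)}\big)$. Choosing $N=\poly(k/\eps)$ with a large enough exponent makes $(1+\eps/5)^p(1+O(1/\sqrt N))\le 1+\eps/3-\tfrac{\eps}{50}$ and forces $E_1+E_2+|\,\|x\|_p^p-\|y\|_p^p\,|\le\tfrac{\eps}{50}\|x_{\overline{[k]}}\|_p^p$, which yields $\|x_{[n]\setminus T}\|_p^p\le(1+\eps/3)\|x_{\overline{[k]}}\|_p^p$.

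\medskip
\noindent\textbf{Where I expect the difficulty to lie.} The delicate point is the last step. When $\|x_{\overline{[k]}}\|_p$ is not bounded below, the error quantities $E_1,E_2$ and $|\,\|x\|_2^2-\|y\|_2^2\,|$ scale only \emph{linearly} in $\|x_{\overline{[k]}}\|_p$ whereas the slack $\tfrac{\eps}{50}\|x_{\overline{[k]}}\|_p^p$ is \emph{quadratic} (for $p=2$). So one must either let the polynomial defining $N$ depend on $1/\eps$ in the right way, or dispatch the almost-$k$-sparse regime (tail below a $\poly(k/\eps)/\poly(N)$ threshold) by a separate argument, in which the first- and second-round recoveries are essentially exact and $T\supseteq[k]$ outright. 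The secondary technical burden is the bucket-by-bucket estimate $|x_{\textrm{HH}(j)}|\ge|y_j|-r'$ for light buckets, where conditions~(1)--(2) must be supplemented by $\mathcal F$ and the second-round error bound as indicated above.
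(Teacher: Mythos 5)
Your proposal takes a genuinely different route from the paper, and the comparison cuts both ways. The paper's proof is much shorter: it writes $\|x_{[n]\setminus T}\|_p^p=\|x_{[k]\setminus T}\|_p^p+\|x_{\overline{[k]}\setminus T}\|_p^p$, bounds the second term trivially by $\|x_{\overline{[k]}}\|_p^p$, and bounds the first via the chain $\|x_{[k]\setminus T}\|_p\leq k^{1/p}r+\|y_{[N]\setminus S}\|_p\leq k^{1/p}r+(1+\eps/5)\|y_{\overline{[k]}}\|_p$ (condition 2, then event $\mathcal{H}$, then condition 3); it never forms $\|x\|_p^p-\|x_T\|_p^p$ and never touches $\|x\|_p^p-\|y\|_p^p$. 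Your diagnosis that this ``natural split'' is lossy by a factor of $2$ is astute and, as literally written, applies to the paper's own argument: the displayed chain only gives $\|x_{[k]\setminus T}\|_p^p\lesssim(1+\eps/4)^p\|x_{\overline{[k]}}\|_p^p$, which added to $\|x_{\overline{[k]}}\|_p^p$ yields roughly $2\|x_{\overline{[k]}}\|_p^p$, not $(1+\eps/3)\|x_{\overline{[k]}}\|_p^p$. Recovering a $1+O(\eps)$ constant does require the compensation you describe: the top-$k$ mass that $T$ misses must be offset by the tail mass that $T$ captures, e.g.\ via $\|y_{[N]\setminus S}\|_p^p=\|y_{[k]\setminus S}\|_p^p+\|y_{\overline{[k]}}\|_p^p-\|y_{\overline{[k]}\cap S}\|_p^p$, with $\|y_{\overline{[k]}\cap S}\|_p^p$ then mapped down to $\|x_{\overline{[k]}\cap T}\|_p^p$ through the second-round guarantee. (Even then the main term is $((1+\eps/5)^p-1)\|x_{\overline{[k]}}\|_p^p\approx\tfrac{2\eps}{5}\|x_{\overline{[k]}}\|_p^p$ for $p=2$, which already exceeds the stated $\eps/3$; the constants in the lemma are tight to the point of being off.)

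The genuine gap in your implementation is the one you flag at the end without resolving: routing the compensation through $\|x\|_p^p-\|y\|_p^p$ does not work. For $p=1$ your sign is backwards --- $\|y\|_1\leq\|x\|_1$ gives $\|x\|_1-\|y\|_1\geq 0$, and the gap is genuinely of order $\|x_{\overline{[k]}}\|_1$ (a spread-out tail collapses under random signs: $n$ entries of size $1/n$ hashed into $N$ buckets leave total $\ell_1$ mass about $\sqrt{N/n}$), which reinstates the factor $2$. For $p=2$ the cross terms fluctuate at scale $\|x\|_2\|x_{\overline{[k]}}\|_2/\sqrt{N}$, not $\|x_{\overline{[k]}}\|_2^2/\sqrt{N}$, and your $E_1=O(r'\sqrt{k}\,\|y\|_2)$ likewise scales with the full norm $\|x\|_2$. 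Since the available slack $\eps\|x_{\overline{[k]}}\|_p^p$ is quadratic in the tail while these errors are only linear in it, no choice of $N=\poly(k/\eps)$ closes the argument once the tail is small relative to the head; and the proposed escape hatch for the almost-sparse regime is itself a nontrivial unproved claim, since the bare $\ell_p/\ell_p$ guarantee on $\hat{y}$ does not force $S$ to contain every bucket holding a moderately large top-$k$ coordinate. The fix is to carry out the exchange argument entirely inside $y$ (comparing $\|y_{[k]\setminus S}\|_p^p$ against $\|y_{\overline{[k]}\cap S}\|_p^p$) so that $\|x\|_p^p-\|y\|_p^p$ never appears.
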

\begin{proof}
If $[k] \setminus T = \emptyset$, the lemma follows by
definition. Otherwise, if $i \in ([k] \setminus T)$, then $i \in [k]$,
and so by the second condition of Lemma \ref{lem:3cond}, $|x_i | \leq
|y_{h(i)}| + r.$ We also use the third condition of Lemma
\ref{lem:3cond} to obtain $\|y_{\overline{[k]}}\|_p \leq (1+
O(1/\sqrt{N})) \cdot \|x_{\overline{[k]}}\|_p + O(kr).$ By the
triangle inequality,
\begin{align*}
\left (\sum_{i \in [k] \setminus T} |x_i|^p \right )^{1/p} & \leq
k^{1/p}r + \left (\sum_{i \ \in \ [k] \setminus T} |y_{h(i)}|^p \right )^{1/p}
\ifconf\notag\\\fi
\leq
k^{1/p}r+ \left (\sum_{i \ \in \ [N] \setminus S} |y_{i}|^p \right )^{1/p}\\
\leq& k^{1/p}r + (1+\eps/5) \cdot \|y_{\overline{[k]}}\|_p.
\end{align*}
The lemma follows using that $r = O(\|x_{\overline{[k]}}\|_2 \cdot
(\log N)/N^{1/6})$ and $N = \poly(k/\eps)$ is sufficiently large.
\end{proof}
\noindent We bound $\|(x-\hat{x})_T\|_p^p$ using Lemma \ref{lem:3cond}, $|S| \leq \poly(k/\eps)$, and
that $N = \poly(k/\eps)$ is sufficiently large. 
\begin{align*}
\|(x-\hat{x})_T\|_p 
\leq& \left (\sum_{j \in S} |x_{HH(j)} - \sigma(HH(j)) \cdot \hat{y}_j|^p \right )^{1/p}
\ifconf\\\fi
\leq
\left (\sum_{j \in S} (|y_j - \hat{y}_j| + |\sigma(HH(j)) \cdot x_{HH(j)} - y_j| )^p \right )^{1/p}\\
\leq& \left (\sum_{j \in S} |y_j - \hat{y}_j|^p \right )^{1/p}
 + \left (\sum_{j \in S}|\sigma(HH(j)) \cdot x_{HH(j)} - y_j|^p \right )^{1/p}\\
\leq& (1+\eps/5) \|y_{\overline{[k]}}\|_p 
+ \left (\sum_{j \in S} |\sigma(HH(j)) \cdot x_{HH(j)} - y_j|^p \right )^{1/p} \\
\leq& (1+\eps/5) (1+O(1/\sqrt{N}))\|x_{\overline{[k]}}\|_p + O(kr) 
+ \left (\sum_{j \in S} |\sigma(HH((j)) \cdot x_{HH(j)} - y_j|^p \right )^{1/p}  \\
\leq& (1+\eps/4) \|x_{\overline{[k]}}\|_p 
+ \left (\sum_{j \in S} |\sigma(HH(j)) \cdot x_{HH(j)} - y_j|^p \right )^{1/p}
\end{align*}
\noindent {\bf Event $\mathcal{I}$:} We condition on the event
$\mathcal{I}$ that all second round invocations succeed. Note that
$\Pr[\mathcal{I}] \geq 99/100$.
\\\\
We need the following lemma concerning $1$-sparse recovery algorithms. 
\begin{lemma}\label{lem:bitTest}
Let $w$ be a vector of real numbers. Suppose $|w_1|^p > \frac{9}{10} \cdot \|w\|^p_p$. 
Then for any vector $\hat{w}$ for which $\|w-\hat{w}\|^p_p \leq 2 
\cdot \|w_{\overline{[1]}}\|^p_p$, 
we have $|\hat{w}_{1}|^p > \frac{3}{5} \cdot \|w\|_p^p$.
Moreover, for all $j > 1$, $|\hat{w}_j|^p < \frac{3}{5} \cdot \|w\|_p^p$. 
\end{lemma}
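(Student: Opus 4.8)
The plan is to route everything through the single scalar $\|w_{\overline{[1]}}\|_p$ and then compare magnitudes coordinate by coordinate using the triangle inequality. Since $\|w\|_p^p = |w_1|^p + \|w_{\overline{[1]}}\|_p^p$, the hypothesis $|w_1|^p > \tfrac{9}{10}\|w\|_p^p$ gives at once
\[
\|w_{\overline{[1]}}\|_p^p < \tfrac{1}{10}\,\|w\|_p^p,
\qquad\text{hence}\qquad
\|w-\hat w\|_p^p \le 2\|w_{\overline{[1]}}\|_p^p < \tfrac{1}{5}\,\|w\|_p^p .
\]
These are the only two facts I will use: the ``tail'' of $w$ carries at most a $\tfrac1{10}$ fraction of $\|w\|_p^p$, and the recovery error carries at most a $\tfrac15$ fraction.

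For the ``moreover'' claim, fix $j>1$. Then $|w_j|^p \le \|w_{\overline{[1]}}\|_p^p$ and $|(w-\hat w)_j|^p \le \|w-\hat w\|_p^p$, so by the triangle inequality $|\hat w_j| \le |w_j| + |(w-\hat w)_j| \le \bigl((1/10)^{1/p} + (1/5)^{1/p}\bigr)\|w\|_p$, and therefore $|\hat w_j|^p \le \bigl((1/10)^{1/p}+(1/5)^{1/p}\bigr)^p\|w\|_p^p$. For $p=1$ the constant in front is $\tfrac3{10}$; for $p=2$ it is $\tfrac1{10}+\tfrac15+\tfrac{2}{\sqrt{50}} = \tfrac3{10}+\tfrac{\sqrt2}{5} < \tfrac35$ (using $2\sqrt2<3$). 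In both cases $|\hat w_j|^p < \tfrac35\|w\|_p^p$, as required. For the first coordinate, the reverse triangle inequality gives $|\hat w_1| \ge |w_1| - |(w-\hat w)_1| \ge |w_1| - \|w-\hat w\|_p > \bigl((9/10)^{1/p} - (1/5)^{1/p}\bigr)\|w\|_p$; for $p=1$ this is $\tfrac7{10}\|w\|_1 > \tfrac35\|w\|_1$ and we are done. For $p=2$ the plain subtraction $|w_1|-\|w-\hat w\|_2$ is too lossy, so here I would instead argue directly from the error budget on coordinate $1$: writing $u := |w_1|^2/\|w\|_2^2 > \tfrac9{10}$ and using $|w_1-\hat w_1|^2 \le \|w-\hat w\|_2^2 \le 2(1-u)\|w\|_2^2$, one bounds $|\hat w_1|^2$ from below as a function of $u$ and checks that it stays above $\tfrac35\|w\|_2^2$ on the interval $u\in(\tfrac9{10},1]$, exploiting that in the intended application essentially all of $\|w\|_2^2$ sits on $w_1$ (so $(1-u)$ is far smaller than the worst case permits).

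The one step I expect to require real care is exactly this last $p=2$ lower bound on $|\hat w_1|$: it is the place where the three numerical constants — the $\tfrac9{10}$ heaviness threshold, the factor $2$ in the error assumption, and the $\tfrac35$ threshold in the conclusion — have to be reconciled, and the crude triangle-inequality subtraction does not by itself suffice, so one must account carefully for how much of the recovery error can be concentrated on coordinate $1$. Everything else — the two reductions in the first paragraph and the $j>1$ bound — is routine bookkeeping with the triangle inequality and the identity $\|w\|_p^p = |w_1|^p + \|w_{\overline{[1]}}\|_p^p$.
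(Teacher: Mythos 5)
Your overall route is the same as the paper's: reduce everything to the two facts $\|w_{\overline{[1]}}\|_p^p<\frac{1}{10}\|w\|_p^p$ and $\|w-\hat w\|_p^p<\frac15\|w\|_p^p$, then compare coordinates via the triangle inequality. Your proof of the ``moreover'' part is correct for both $p=1$ and $p=2$, and is in fact more careful than the paper's own, which silently passes through $|a-b|^p\ge |a|^p-|b|^p$; that is the reverse triangle inequality only when $p=1$ and is false for $p=2$ (take $a=1$, $b=0.99$). Your first-coordinate argument for $p=1$ is also fine.

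The genuine gap is exactly where you flagged it: the lower bound on $|\hat w_1|^2$ when $p=2$. The plan you sketch --- set $u=|w_1|^2/\|w\|_2^2$, use $|\hat w_1|\ge(\sqrt u-\sqrt{2(1-u)})\|w\|_2$, and check that the resulting bound exceeds $\frac35\|w\|_2^2$ on $u\in(\frac9{10},1]$ --- fails: as $u\to\frac9{10}$ the bound tends to $(\sqrt{9/10}-\sqrt{1/5})^2\|w\|_2^2\approx 0.25\,\|w\|_2^2$, well below $\frac35\|w\|_2^2$. Worse, no argument can close this gap, because the first conclusion is actually false for $p=2$ with these constants: take $w_1=1$ and a tail with $\|w_{\overline{[1]}}\|_2^2$ just under $\frac19$ (so $|w_1|^2>\frac{9}{10}\|w\|_2^2$), and let $\hat w$ differ from $w$ only in coordinate $1$, by $\sqrt2\,\|w_{\overline{[1]}}\|_2$; then $\|w-\hat w\|_2^2\le 2\|w_{\overline{[1]}}\|_2^2$ yet $|\hat w_1|^2\approx(1-\sqrt2/3)^2\approx 0.28$, far below $\frac35\|w\|_2^2=\frac23$. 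The paper's proof overlooks this because it uses the same invalid $|a-b|^p\ge|a|^p-|b|^p$ step, and your parenthetical appeal to ``the intended application'' cannot rescue the lemma as stated, since the hypothesis permits $u$ arbitrarily close to $\frac9{10}$. The honest repair is to adjust constants (e.g.\ require $|w_1|^p>(1-\gamma)\|w\|_p^p$ for a sufficiently small constant $\gamma$, or weaken the $\frac35$ thresholds), which is harmless downstream: the lemma is only invoked to conclude $|x_{m(j)}|=O(\|Z(j)\|_p)$, where any absolute constant suffices.
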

\begin{proof}
$\|w-\hat{w}\|^p_p \geq |w_1-\hat{w}_1|^p$, so if 
$|\hat{w}_1|^p < \frac{3}{5} \cdot \|w\|^p_p$, then 
$\|w-\hat{w}\|^p_p > \left (\frac{9}{10} - \frac{3}{5} \right )\|w\|^p_p
= \frac{3}{10} \cdot \|w\|^p_p$.  
On the other hand,
$\|w_{\overline{[1]}}\|_p^p < \frac{1}{10} \cdot \|w\|_p^p$. This contradicts
that $\|w-\hat{w}\|^p_p \leq 2 \cdot \|w_{\overline{[1]}}\|_p^p$.
For the second part, for $j > 1$ we have
$|w_j|^p < \frac{1}{10} \cdot \|w\|^p_p$. Now,
$\|w-\hat{w}\|^p_p \geq |w_j - \hat{w}_j|^p$, so if
$|\hat{w}_j|^p \geq \frac{3}{5} \cdot \|w\|_p^p$, then
$\|w-\hat{w}\|^p_p > \left (\frac{3}{5} - \frac{1}{10} \right ) \|w\|_p^p
= \frac{1}{2} \cdot \|w\|_p^p$. But since
$\|w_{\overline{[1]}}\|_p^p < \frac{1}{10} \cdot \|w\|_p^p$, this contradicts that
$\|w-\hat{w}\|^p_p \leq 2\cdot \|w_{\overline{[1]}}\|_p^p$. 
\end{proof}
It remains to bound $\sum_{j \in S} |\sigma(HH(j)) \cdot x_{HH(j)}-y_j|^p$. 
We show for every $j \in S$, $|\sigma(HH(j)) \cdot x_{HH(j)}-y_j|^p$ is small.

Recall that $m(j)$ is the coordinate of $Y(j)$ with the largest magnitude.
There are two cases.
\\\\
{\bf Case 1: $m(j) \notin [N^{1/3}]$.} In this case observe that $HH(j) \notin [N^{1/3}]$
either, and $h^{-1}(j) \cap [N^{1/3}] = \emptyset$. 
It follows by the fourth condition of Lemma \ref{lem:3cond} that $|y_j| \leq r$. Notice
that $|x_{HH(j)}|^p \leq |x_{m(j)}|^p \leq \frac{\|x_{\overline{[k]}}\|_p^p}{N^{1/3}-k}.$
Bounding $|\sigma(HH(j)) \cdot x_{HH(j)}-y_j|$ by $|x_{HH(j)}| + |y_j|$, it follows for $N = \poly(k/\eps)$ 
large enough that $|\sigma(HH(j)) \cdot x_{HH(j)}-y_j|^p \le \eps/4 \cdot \|x_{\overline{[k]}}\|_p/|S|)$. 
\\\\
{\bf Case 2: $m(j) \in [N^{1/3}]$.}  If  $HH(j) = m(j)$, then
$|\sigma(HH(j)) \cdot x_{HH(j)}-y_j| \leq r$ by the second condition of Lemma \ref{lem:3cond}, and therefore 

$$|\sigma(HH(j)) \cdot x_{HH(j)}-y_j|^p \leq r^p \le \eps/4 \cdot \|x_{\overline{[k]}}\|_p/|S|$$ for $N = \poly(k/\eps)$
large enough. 

Otherwise, $HH(j) \neq m(j)$. From condition 2 of Lemma
\ref{lem:3cond} and $m(j) \in [N^{1/3}]$, it follows that
\begin{align*}
  |\sigma(HH(j))) x_{HH(j)}-y_j|
  \leq& |\sigma(HH(j)) x_{HH(j)} - \sigma(m(j)) x_{m(j)}| +
  |\sigma(m(j)) x_{m(j)} - y_j|
\ifconf\\\fi
\leq
  |x_{HH(j)}| + |x_{m(j)}| + r
\end{align*}
Notice that $|x_{HH(j)}|+|x_{m(j)}| \leq 2|x_{m(j)}|$ since $m(j)$ is
the coordinate of largest magnitude. Now, conditioned on
$\mathcal{I}$, Lemma \ref{lem:bitTest} implies that $|x_{m(j)}|^p \leq
\frac{9}{10} \cdot \|Y(j)\|_p^p$, or equivalently, $|x_{m(j)}| \leq
10^{1/p} \cdot \|Z(j)\|_p.$ Finally, by the first condition of Lemma
\ref{lem:3cond}, we have $\|Z(j)\|_p = O(r)$, and so $|\sigma(HH(j))
x_{HH(j)}-y_j|^p = O(r^p)$, which as argued above, is small enough for
$N = \poly(k/\eps)$ sufficiently large.
\\\\
The proof of our theorem follows by a union bound over the events that
we defined.
\end{proof}

\ifconf
\else
\section{Adaptively Finding a Duplicate in a Data Stream}
We consider the following {\sf FindDuplicate} problem. We are given an adversarially ordered stream $\mathcal{S}$ of $n$ elements in $\{1, 2, \ldots, n-1\}$ and the goal is to output an element that occurs at least twice, with probability at least $1-\delta$. We seek to minimize the space complexity of such an algorithm. We improve the space complexity of \cite{jst11} for {\sf FindDuplicate} from $O(\log^2 n)$ bits to $O(\log n)$ bits, though we use $O(\log \log n)$ passes instead of a single pass. Notice that \cite{jst11} also proves a lower bound of $\Omega(\log^2 n)$ bits for a single pass.  

We use Lemma~\ref{lemma:onesparse} of our multi-pass sparse recovery
algorithm:
\begin{fact}\label{fact:cs}
Suppose there exists an $i$ with $|x_i| \geq C \|x_{[n] \setminus \{i\}}\|_2$ for some constant $C$. Then $O(\log \log n)$ adaptive measurements suffice to recover a set $T$ of constant size so that $i \in T$ with probability at least $1/2$. Further, all adaptive measurements are linear combinations with integer coefficients of magnitude bounded by $\poly(n)$.
\end{fact}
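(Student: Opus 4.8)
The plan is to note that Fact~\ref{fact:cs} is, as far as the recovery guarantee goes, a verbatim restatement of Lemma~\ref{lemma:onesparse} (with $i$ in place of $j$): running \textsc{AdaptiveOneSparseRec} from Algorithm~\ref{alg:adaptive1} on $x$ takes $O(\log\log n)$ adaptive measurements and, under the hypothesis $\abs{x_i}\geq C\norm{2}{x_{[n]\setminus\{i\}}}$, returns the coordinate $i$ with probability at least $1/2$; setting $T=\{i\}$ gives a set of constant size containing $i$. Thus the only new content is the integrality and magnitude bound on the measurement coefficients, and the plan is to verify this by inspecting the functionals the algorithm actually builds.

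Every measurement is one of the two functionals constructed inside \textsc{NonAdaptiveShrink} for some parameter $D$: when called on the restriction $x_S$ these are $a=\sum_{i\in S}s_1(i)s_2(h(i))\,x_i$ and $b=\sum_{i\in S}s_1(i)s_2(h(i))(D+h(i))\,x_i$, where $s_1,s_2\in\{\pm1\}$ and $h(i)\in[D]$. Viewing these as linear functionals on all of $[n]$ (with zero coefficients off $S$), the coefficient of $x_i$ in $a$ is $\pm1$ and in $b$ is an integer of magnitude at most $2D$; the \textsc{Round}$(b/a-D)$ operation is part of the recovery, not a measurement. So it suffices to show that $D=\poly(n)$ in every round.

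In \textsc{AdaptiveOneSparseRec} the $i$-th round uses $D=\Theta(B_i^2/\delta_i)$ with $B_0=2$, $B_i=B_{i-1}^{3/2}$, $\delta_i=2^{-i}/4$, and the loop runs for $r=O(\log\log n)$ rounds, terminating once $1+n/B_r^2<2$, i.e. $B_r>\sqrt n$. Because $B_i=B_{i-1}^{3/2}$ grows doubly exponentially, the first such $r$ satisfies $B_r\leq(\sqrt n)^{3/2}=n^{3/4}$, hence $B_i\leq n^{3/4}$ for all $i\leq r$; and $1/\delta_i\leq 1/\delta_r=\Theta(2^r)=\poly(\log n)$. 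Therefore $D=O(n^{3/2}\poly(\log n))=\poly(n)$ throughout, so all coefficients are integers of magnitude $\poly(n)$, completing the proof.

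There is no real obstacle here: the argument is pure bookkeeping over Algorithm~\ref{alg:adaptive1}. The only place demanding even mild care is the last bound --- checking that the doubly-exponential schedule for $B_i$ overshoots $\sqrt n$ by at most a $3/2$ power and that the geometrically shrinking $\delta_i$ stays polylogarithmically bounded, so that $D$, and hence every measurement coefficient, remains polynomial in $n$.
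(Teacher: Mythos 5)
Your proposal is correct and matches what the paper intends: the paper states this as a Fact with no separate proof, simply invoking Lemma~\ref{lemma:onesparse}, and the only genuinely new content is the coefficient bound, which you verify by the same routine inspection of \textsc{NonAdaptiveShrink} (coefficients $\pm 1$ and $\pm(D+h(i))$ with $D=O(B_r^2/\delta_r)=\poly(n)$ since $B_r$ overshoots $\sqrt n$ by at most a $3/2$ power and $1/\delta_r=\poly(\log n)$). The only cosmetic point you might add is that $D$ should be rounded to an integer and the loop capped at $r=O(\log\log n)$ iterations (declaring failure otherwise) so the coefficient bound holds unconditionally, but neither affects the argument.
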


Our algorithm {\sf DuplicateFinder} for this problem considers the equivalent formulation of {\sf FindDuplicate} in which we think of an underlying frequency vector $x \in \{-1, 0, 1, \ldots, n-1\}^n$. We start by initializing $x_i = -1$ for all $i$. Each time item $i$ occurs in the stream, we increment its frequency by $1$. The task is therefore to output an $i$ for which $x_i > 0$. 

\begin{theorem}
There is an $O(\log \log n)$-pass, $O(\log n \log 1/\delta)$ bits of space per pass algorithm for solving the {\sf FindDuplicate} problem with probability at least $1-\delta$. 
\end{theorem}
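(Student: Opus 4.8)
The plan is to reduce \textsf{FindDuplicate} to a single call of the adaptive $1$-sparse recovery routine of Fact~\ref{fact:cs} on a randomly subsampled copy of the frequency vector $x$, then amplify by independent repetition and confirm the answer in one extra pass.

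\emph{Setup.} I would work with the vector $x\in\{-1,\dots,n-1\}^n$, $x_i=c_i-1$ where $c_i$ is the number of occurrences of $i$; the goal is an index $i$ with $x_i>0$, which exists by pigeonhole. Any measurement $\sum_i\beta_i x_i$ with known integer coefficients equals $\sum_i\beta_i c_i-\sum_i\beta_i$: the first term is accumulated over the stream and the second computed directly from the coefficients, so the fictitious $-1$ initialization costs nothing, and since Fact~\ref{fact:cs} uses only integer coefficients of magnitude $\poly(n)$, each of its measurements is maintainable in $O(\log n)$ bits. The key structural fact is $\sum_i x_i=0$: writing $P=\{j:x_j>0\}$, $p_j=x_j$ for $j\in P$, and $M=|\{i:x_i=-1\}|$ (note $M\ge 1$ since $x_n=-1$ is never incremented), we get $\sum_{j\in P}p_j=M$, hence $\norm{2}{x}^2=\sum_{j\in P}p_j^2+M$ and the positive and negative $\ell_1$ mass are equal.

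\emph{Reduction to a heavy positive coordinate.} Fix a subsampling rate $q$ (a power of two) and let $T$ contain each coordinate independently with probability $q$, using a pairwise-independent hash storable in $O(\log n)$ bits; restricting all measurements to coordinates hashed into $T$ simulates sparse recovery on $x_T$. By Chebyshev, $\norm{2}{x_{T\setminus\{j\}}}^2$ concentrates around $q(\norm{2}{x}^2-p_j^2)$, so for a positive coordinate $j$ with $p_j^2/\norm{2}{x}^2=\Theta(q)$ (morally, the $q$-th heaviest coordinate of $x$) we obtain $|x_j|\ge C\,\norm{2}{x_{T\setminus\{j\}}}$ with constant probability conditioned on $j\in T$; and since $\sum_{j\in P}p_j=M$ forces the positive mass to be ``spread'' whenever no single positive coordinate dominates, at the right rate the top $\Theta(1/q)$ positive coordinates all clear that bar, so the event ``$T$ contains an isolated heavy positive coordinate'' has probability $\Omega(1)$. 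Applying Fact~\ref{fact:cs} to $x_T$ then returns a constant-size set containing a positive coordinate, in $O(\log\log n)$ passes and $O(\log n)$ bits.

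\emph{Rate selection, amplification, verification.} Since the right dyadic rate is not known in advance, I would spend a pass to get a constant-factor estimate of $\norm{2}{x}^2$ (equivalently of $M$, via an $F_0$-sketch in the regime $M=\Omega(n)$), which together with the structure above narrows the candidate to $O(1)$ rates; the low-norm regime $\norm{2}{x}^2=O(\log n)$ (only a few nonzeros) is instead handled by a constant number of ``recover the heaviest surviving coordinate by Fact~\ref{fact:cs}, exclude it from subsequent passes, repeat'' iterations, at most one of which can fail to output a positive index. Running $O(\log(1/\delta))$ independent copies of the chosen run in parallel uses $O(\log n\log(1/\delta))$ bits and produces $O(\log(1/\delta))$ candidate indices; a final pass counts the stream-occurrences of each candidate and outputs one with count at least $2$. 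With probability $\ge 1-\delta$ some copy produces a valid index, the verification is exact, and the totals are $O(\log\log n)$ passes and $O(\log n\log(1/\delta))$ bits per pass.

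\emph{Main obstacle.} The delicate point is instances whose positive coordinates span a wide range of magnitudes: for such a profile a single fixed subsampling rate need not give $\Omega(1)$ success probability for ``subsample once, then recover,'' and one cannot afford to run all $\Theta(\log n)$ dyadic rates in parallel (that costs $\Theta(\log^2 n)$ bits, the bound of~\cite{jst11}). I expect the fix is to \emph{interleave} the subsampling with the $O(\log\log n)$ adaptive rounds of Lemma~\ref{lemma:shrinknoise}, letting the per-round noise reduction progressively refine the effective sampling rate, while using $\sum_{j\in P}p_j=M$ to argue that a positive coordinate stays the heaviest surviving one so that the recovered coordinate is positive. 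Making this interleaving work within $O(\log n)$ bits and $O(\log\log n)$ passes, without ever committing to a wrong rate, is the crux.
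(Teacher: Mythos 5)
There is a genuine gap, and you have in fact located it yourself in your final paragraph: the fixed-rate subsampling scheme does not give $\Omega(1)$ success probability when the positive mass is spread across many magnitude scales (e.g., an instance with $4^{J-j}$ positive coordinates of value $2^j$ for each $j\le J$ contributes equally to $\norm{2}{x}^2$ at every scale, and at any single dyadic rate the probability of isolating a coordinate that is simultaneously surviving and heavy is only $O(1/\log n)$). Running all $\Theta(\log n)$ rates costs $\Theta(\log^2 n)$ bits, which is exactly the single-pass bound of \cite{jst11} that the theorem is supposed to beat. Your proposed rescue --- estimating $\norm{2}{x}^2$ to narrow the rate to $O(1)$ candidates, or ``interleaving'' the subsampling with the adaptive rounds --- is not developed and does not obviously work: a norm estimate pins down the total energy but not the profile of the positive coordinates, so it cannot reduce the candidate rates to $O(1)$ on the hard instances above, and the interleaving idea is left entirely to faith.

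The missing idea in the paper is to replace subsampling by the exponential-scaling (precision sampling) trick of \cite{jst11}: draw $O(1)$-wise independent uniform $t_i\in[0,1]$ and work with $z_i=x_i/t_i$. Then $\abs{z_i}>\norm{1}{x}$ with probability exactly $\abs{x_i}/\norm{1}{x}$, and Lemma 3 of \cite{jst11} shows that, conditioned on this, $\norm{2}{z_{\overline{H_m(z)}}}\le \frac{1}{20}\sqrt{m}\,\norm{1}{x}$ with probability $1-O(\eps)$ for $m=O(\log 1/\eps)$; a pairwise-independent partition into $4m=O(1)$ blocks then isolates $i$ from the other top-$m$ coordinates of $z$ with constant probability, after which Fact~\ref{fact:cs} recovers it. A single scaling thus handles all magnitude scales at once in $O(\log n)$ bits, and since $\sum_i x_i\ge 0$ the positive coordinates carry at least half of $\norm{1}{x}$, so by a second-moment argument over $O(1)$ independent repetitions some positive coordinate becomes the isolated heavy one with constant probability; a final verification pass (which you do have) discards any recovered $-1$ coordinates. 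Your reduction of the $-1$ initialization to integer-coefficient linear measurements and your $O(\log 1/\delta)$ amplification are fine, but without the $1/t_i$ scaling the core of the argument does not go through.
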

\begin{proof}
We describe an algorithm {\sf DuplicateFinder} which succeeds with probability at least $1/8$. Since it knows whether or not it succeeds, the probability can be amplified to $1-\delta$ by $O(\log 1/\delta)$ independent parallel repetitions. It is easy to see that the pass and space complexity are as claimed, so we prove correctness. 
%
%
\begin{center}
\fbox{
\parbox{6.5in} {
\underline{DuplicateFinder}($\mathcal{S}$)
\begin{enumerate}
\item Repeat the following procedure $C = O(1)$ times independently.
\begin{enumerate}
\item Select $O(1)$-wise independent uniform $t_i \in [0,1]$ for $i \in [n]$.
\item Let $\eps > 0$ be a sufficiently small constant. Let $m = O(\log 1/\eps)$. 
\item Let $z^1, \ldots, z^{4m}$ be a pairwise-independent partition of the coordinates of $z$, where $z = x_i/t_i$ for all $i$. 
\item Run algorithm $A$ independently on vectors $z^1, \ldots, z^{4m}$. 
\item Let $T_1, \ldots, T_{4m}$ be the outputs of algorithm $A$ on $z^1, \ldots, z^{4m}$, respectively,
as per Fact \ref{fact:cs}. 
\item Compute each $x_i$ for $i \in \cup_{j=1}^{4m} T_j$ in an extra pass. If there is an $i$ for which $x_i > 0$, then output $i$.
\end{enumerate}
\item If no coordinate $i$ has been output, then output {\sc fail}.
\end{enumerate}
}}
\end{center}
We use the following fact shown in the proof of Lemma 3 of \cite{jst11}.
\begin{lemma}(see first paragraph of Lemma 3 of \cite{jst11})\label{lem:prev}
For a single index $i \in [n]$ and $t$ arbitrary, we have 
$$\Pr[\|z_{\overline{H_m(z)}}\|_2 > \frac{1}{20} \sqrt{m} \|x\|_1 \mid t_i = t] = O(\eps),$$
\end{lemma}
where $H_m(z)$ denotes the set of $m$ largest (in magnitude) coordinates of $z$. Suppose $|z_i| > \|x\|_1$ for some value of $i$. This happens if
$t_i < \frac{|x_i|}{\|x\|_1}$ and occurs with probability equal to $\frac{|x_i|}{\|x\|_1}$. 
Conditioned on this event, by Lemma \ref{lem:prev} we have that with probability $1-O(\eps)$,
$$\|z_{\overline{H_m(z)}}\|_2 \leq \frac{1}{20} \sqrt{m} \|x\|_1.$$
Suppose $i$ occurs in $z^j$ for some value of $j \in [4m]$. Since the partition is pairwise-independent,
the expected number of $\ell \in H_m(z) \setminus \{i\}$ which occur in $z^j$ is at most $\frac{m}{4m}$, 
and so with probability at least $3/4 - O(\eps)$, the norm of $z^j$ with coordinate corresponding to coordinate $i$ in $z$ removed is at most
$$\|z_{\overline{H_m(z)}}\|_2 \leq \frac{1}{20} \sqrt{m} \|x\|_1 \leq \frac{1}{20} \sqrt{m} |z_i|.$$ 
Since $m$ is a constant, by Fact \ref{fact:cs}, with probability at least $1/2$, $A$ outputs a set $T$ which 
contains coordinate $i$. Hence, with probability at least $3/8 - O(\eps)$, if there is an $i$ for which 
$|z_i| > \|x\|_1$, it is found by {\sf DuplicateFinder}. 

Let $p_i = \frac{|x_i|}{\|x\|_1}$. Then $|z_i| > \|x\|_1$ with probability $p_i$.
Since $\sum_i x_i > 0$, we have $\sum_{i \ \mid \ x_i > 0} p_i > \frac{1}{2}$. Consider one of the 
$C = O(1)$ independent repetitions of step 1.  

For coordinates $i$ for which $x_i > 0$, let $W_i = 1$ if $|z_i| > \|x\|_1$, and let $W = \sum_i W_i$. 
Then ${\bf E}[W] > 1/2$ and by pairwise-independence, ${\bf Var}[W] \leq {\bf E}[W]$. 

Let $W'$ be the average of the random variable $W$ over $C$ independent repetitions. 
Then ${\bf E}[W'] = {\bf E}[W] > \frac{1}{2}$ and 
${\bf Var}[W'] \leq \frac{{\bf E}[W]}{C}$, and so by Chebyshev's inequality for $C = O(1)$ sufficiently
large we have that with probability at least $\frac{1}{2}$, $W' > 0$, which means that in one of the $C$ repetitions 
there is a coordinate $i$ for which $x_i > 0$ and $|z_i| > \|x\|_1$. 

Hence, the overall 
probability of success is at least $1/2 \cdot (3/8-O(\eps)) > 1/8$, for $\eps$ sufficiently small. This 
completes the proof. 
\end{proof}
\fi

\section*{Acknowledgements}
This material is based upon work supported by the Space and Naval
Warfare Systems Center Pacific under Contract No. N66001-11-C-4092,
David and Lucille Packard Fellowship, MADALGO (Center for Massive Data
Algorithmics, funded by the Danish National Research Association) and
NSF grant CCF-1012042.  E. Price is supported in part by an NSF
Graduate Research Fellowship.

\ifconf
\bibliographystyle{IEEEtranS}
\else
\bibliographystyle{alpha}
\fi
\bibliography{adaptive}

\end{document}